\DeclareMathOperator*{\argmax}{arg\,max}
\newtheorem{theorem}{Theorem}
\newtheorem{proposition}{Proposition}
\newtheorem{lemma}{Lemma}
\newtheorem{corollary}{Corollary}
\newtheorem{definition}{Definition} % Use this for non-trivial
\newcommand{\rank}{\mbox{rank}}
\newcommand{\Supp}{\textrm{Supp}}
\newcommand{\D}{\mathcal{D}}
\newcommand{\BR}{\beta}
\newcommand{\Ob}{\mathcal{O}}
\newcommand{\dm}{\textrm{dim}}
\newcommand{\rcn}{\kappa_r}
\newcommand{\ccn}{\kappa_c}
\newcommand{\cn}{\kappa}
\newcommand{\comp}{\sigma}
\begin{document}
\title{On the Existence of Low-Rank Explanations for \\  Mixed Strategy Behavior}
\author{
Siddharth Barman\thanks{Center for the Mathematics of Information, California Institute of Technology. {\tt barman@caltech.edu}}
\and Umang Bhaskar \thanks{Center for the Mathematics of Information, California Institute of Technology. {\tt umang@caltech.edu}}
\and Federico Echenique \thanks{Humanities and Social Sciences, California Institute of Technology. {\tt fede@hss.caltech.edu}}
\and Adam Wierman \thanks{Computing and Mathematical Sciences, California Institute of Technology. {\tt adamw@caltech.edu}}}

\date{}

%\date{}
%\thanks{Center for the Mathematics of Information, California Institute of Technology. {\tt barman@caltech.edu}}
%\thanks{Center for the Mathematics of Information, California Institute of Technology. {\tt umang@caltech.edu}}
%\thanks{Humanities and Social Sciences, California Institute of Technology. {\tt fede@hss.caltech.edu}}
%\thanks{Computing and Mathematical Sciences, California Institute of Technology. {\tt adamw@caltech.edu}}
%barman@caltech.edu, umang@caltech.edu, fede@hss.caltech.edu, adamw@caltech.edu

%\institute{California Institute of Technology \\ \email{\{barman, umang, fede, adamw\}@caltech.edu}}

\maketitle

\begin{abstract}
%We consider the computational complexity of Nash equilibria in bimatrix games when the input consists of multiple observations of player behavior. Specifically, 
%Nash equilibria are often used as a model to explain players' observed behavior. However, if the model requires players to
%have solved a computationally hard problem then the explanation provided by the model is questionable. In this paper we consider settings in which observed behavior of players --- instead of their payoffs --- is given and the goal is to test if Nash equilibrium is a reasonable explanation of behavior or not. Our results show that one can quite often find explanations for economic behavior which do not require players to solve computationally hard problems. Hence in such cases Nash equilibria, even with complexity considerations, remains an applicable model of behavior. Specifically, we identify three structural conditions and show that if the data set of observed behavior satisfies any of these conditions, then it is consistent with payoff matrices for which the observed Nash equilibria could have been computed efficiently. 
%We provide conditions characterizing under which observed behavior of players can be rationalized using games in which Nash equilibria are easy to compute. 
Nash equilibrium is used as a model to explain the observed behavior of players in strategic settings. For example, in many empirical applications we observe player behavior, and the problem is to determine if there exist payoffs for the players for which the  equilibrium corresponds to observed player behavior. Computational complexity of Nash equilibria is an important consideration in this framework. If the instance of the model that explains observed player behavior requires players to have solved a computationally hard problem, then the explanation provided is questionable. In this paper we provide conditions under which Nash equilibrium is a reasonable explanation for strategic behavior, i.e., conditions under which observed behavior of players can be explained by games in which Nash equilibria are easy to compute. We identify three structural conditions and show that if the data set of observed behavior satisfies any of these conditions, then it is consistent with payoff matrices for which the observed Nash equilibria could have been computed efficiently. Our conditions admit large and structurally complex data sets of observed behavior, showing that even with complexity considerations, Nash equilibrium is often a reasonable model. %Key to our results is a connection between structural properties of data sets and the player rank of a game, defined to be the minimum rank of the payoff matrices of the players. Our focus on player rank stems from the fact that equilibria can be efficiently computed in games with low player rank. We complement our results by constructing data sets that require rationalizing games to have high player rank.

%conditions under which this does not happen, i.e., 
%The
%three conditions we obtain are: (i) small \emph{dimensionality} of the
%observed strategies, (ii) small \emph{support size} of the observed
%strategies, and (iii) small \emph{chromatic number} of the data set.
%A
%direct consequence of our work is that in settings where the observed
%data satisfies any of the given conditions, the criticism that
%computing Nash equilibrium requires solving computationally hard
%problems might not be not relevant.
%Key to our results is a connection between structural properties of data sets and the player rank of a game, defined to be the minimum rank of the payoff matrices of the players.  Our focus on player rank stems from the fact that equilibria can be efficiently computed in games with low player rank.  We complement our results by constructing data sets that require rationalizing games to have high player rank.
%, leaving open the possibility of existence of data sets for which any rationalization 

%which leaves open the possibility that
%there exist data sets that require rationalizations for which the observed equilibria are hard to compute.
\end{abstract}

%\category{F.2.0}{Analysis of Algorithms and Problem Complexity}{General}

%\terms{Theory, Algorithms, Economics}

%\keywords{Computational complexity, Nash equilibrium, revealed preference, matrix rank}

%\begin{bottomstuff}
%This research was supported by NSF grants CNS-0846025 and CCF-1101470.

%Author's addresses: S. Barman {and} U. Bhaskar, Center for the Mathematics of Information, California Institute of Technology; A. Wierman, Computing and Mathematical Sciences, California Institute of Technology; F. Echenique, Humanities and Social Sciences, California Institute of Technology.
%\end{bottomstuff}

\section{Introduction}
\label{s.intro}

The computational complexity of equilibria in economic models is at the
core of recent research in algorithmic game theory. In general, the basic
message of this research is negative: Computing Nash equilibria is
PPAD-complete even for 2-player games~\cite{ChenDT09}, and
computing Walrasian equilibria is
PPAD-hard~\cite{ChenDDT09,VaziraniY11}. The fact that the standard notions of
equilibrium used by economists are hard to compute raises a concern that they are  flawed models for
economic behavior, because they lack a plausible rationale for how one would
arrive at equilibrium.

These hardness results have motivated the study of instances in which equilibrium can be computed efficiently, e.g.,~\cite{jain2007polynomial, Adsul2011, GargJM11}. These results --- both positive and negative ---  assume a fixed and literal interpretation of economic models. For example, in two-player games the assumption is that payoff matrices are explicitly specified, and agents seek to maximize their payoffs given the strategy of the other player.

However, models are frequently used as explanations of behavior, rather than literal descriptions. In empirical applications of game theory, for example, one rarely observes payoffs of the players. Rather, what we observe is the behavior of the players, and a first-order task is to determine if the model explains observed behavior or not (see, e.g., \cite{berry, bresnahan1989empirical} for classical examples of this exercise). In particular, a good model instance (i.e., payoffs specification) is one where the observed behavior of the players is as if they were playing equilibrium. This is certainly how economists use game theory: as a modeling tool to explain observed behavior. 

%is one where the observed behavior of the players is the same as the expected outcome, such as Nash equilibrium, in \emph{some} instance of the model. 

This perspective however cannot ignore computational complexity. Even if a model explains observed behavior, it is unreasonable if it requires players to solve computationally hard problems. Thus, an important question that arises from the perspective of models as explanations of observed behavior and  the computational complexity of equilibrium is: \emph{when is an economic model a reasonable explanation for observed behavior?} We address this question in the context of Nash equilibrium as a model for two-player games, and computational complexity as a measure of reasonableness.

%Given the importance of Nash equilibrium in game theory, an important question is when is Nash equilibrium a reasonable model for observed behavior? This is the question our work addresses. Our starting point in this work is thus the observed behavior of players. We 

This framework of starting with observed behavior is consistent with much of economics, and is formalized in \emph{revealed preference theory}, which was pioneered by Samuelson in 1938~\cite{samuelson1938note}
and has a long tradition in economics (see, e.g.,
\cite{afria67,varia82,varia83,varia84,varian2006revealed}). Classical revealed preference theory asks, ``Does there exist an instance of the model that is consistent with the observed behavior?'' If there is such a model instance, then the observed behavior is said to be consistent with the theory, and thus \emph{rationalizable}.  %For example, one may ask if the observed behavior of players is consistent with the theory of equilibrium. 
Our work augments the fundamental question of rationalizability in revealed preference theory with
complexity considerations. %Our question is then not only about the
%existence of an instance of the model that can explain the data, but
%about {\em the existence of a tractable instance} that can explain the data. Specifically, 
The question we address is then, ``Does there exist an instance of the model that is consistent with the observed behavior \emph{and} for which the observations could have been computed efficiently?''

We focus on addressing this question in the context of bimatrix games.  We consider a setting where mixed strategy behavior (i.e., probability distributions over the actions of the players) is observed. In particular, with multiple instances of mixed strategy behavior as input, we want to know if there is some tractable instance of the game such that the given observations correspond to Nash equilibria. We focus on low \emph{player rank} --- the minimum of the ranks of the payoff matrices of the players --- as our notion of tractability given recent algorithmic results, e.g., \cite{GargJM11}. We defer a discussion of our modeling assumptions and their relaxations to Section~\ref{s.end}. %is perfectly observable. Such an assumption is justified as our results are easier to establish under the extra flexibility afforded by approximate, or noisy, observations.

% could have generated the observed behavior. 
%The basic message of our results is positive. We provide sufficient
%conditions under which observed data is consistent with games with low player rank. In other words, we specify conditions under which observed behavior corresponds to equilibria of a tractable game, i.e., a game that admits efficient computation of Nash equilibria. We also complement our results by constructing data sets that require rationalizing games to have high player rank, which leaves open the possibility that there may exist data sets that require rationalizations for which the observed equilibria are hard to compute.

It is important to be clear about what we do not do. We do not address a problem of inference; we do not claim to estimate or back out any instance of the game. Our focus is on the problem of {\em testing}: When can a dataset be explained by equilibrium behavior for which there is a story (an efficient algorithm) for how players could have arrived at the equilibrium? The existence of such an
explanation does not imply that the explanation is unique; indeed
there may be other explanations with different properties, a common
situation in revealed preference theory. Our results say that one can quite
often find explanations for economic behavior which do not require agents to solve computationally hard problems. Hence in such settings Nash equilibria, even with complexity considerations, remains an applicable model of behavior.

\paragraph{Summary of results.} Broadly, the results in this paper show that large and structurally complex data sets of observations --- including data sets with overlapping observed strategies, each of which may be an arbitrary distribution --- can be rationalized by games that  admit efficient computation of the observed Nash equilibria. %We complement our results by constructing data sets that require any rationalizing game to have high player rank. %, which leaves open the possibility that there may exist data sets that require rationalizations for which the observed equilibria are hard to compute.
 %Thus, such data are consistent with the theory of Nash equilibrium under a form of ``bounded rationality'' defined via computational constraints.
%have low {player rank}, and hence allow
Specifically, we identify three measures of structural
complexity in data sets, and show that if a data set has a low value
on any one of these measures for either player, or can be partitioned in a manner such that each partition has a low value on any one of these measures, then it has a rationalization with low player rank.  The three measures we study are (i) the
\emph{dimensionality} of the observed strategies (Theorem~\ref{thm:ub-count}), (ii) the
\emph{support size} of the observed strategies (Theorem~\ref{thm:ss}), and (iii) the
\emph{chromatic number} of the data set (Theorem~\ref{thm:cn}). We believe these are
natural and complementary measures to evaluate the structural
complexity of a data set, and our contribution is to show that each of
these measures individually translates to the existence of rationalizing games with low player rank.

Finally, we also show that the bounds we obtain on the player rank for these measures are nearly tight by giving an example of a data set that necessitates high player rank for any rationalization (Theorem \ref{thm:lb}). % This data set additionally highlights that the hypothesis of low player rank can be refuted by data.

\paragraph{Related work.} 
%This paper seeks to incorporate computational complexity issues into revealed preference theory in the context of bimatrix games.
This paper seeks to incorporate computational complexity as a measure of reasonableness of models (which are used to explain observed behavior) in the context of bimatrix games. This general direction was initiated in the context of consumer choice theory \cite{consumer}. In this context it was shown that a data set of $n$ observations of a consumer choosing among $d$ indivisible goods can always be explained by a consumer utility function that can be optimized in $O(nd)$ time.  Thus, despite the fact that the consumer choice problem is NP-hard, consumer choice data is not rich enough to expose computationally hard utility functions without an exponentially large data set.

The question was also asked in the context of bimatrix games in \cite{BarmanBEW13}; however the focus in that work was on {\em pure strategy equilibria} and so differs from the current paper, which focuses on mixed strategy equilibria.  In \cite{BarmanBEW13}, the core problem is not computationally hard, and so the focus is on the structural complexity of the rationalizing game, as formalized via game rank.\footnote{Game rank is defined to be the rank of the sum of the payoff matrices of the players. On the other hand, player rank is the minimum of the ranks of the payoff matrices of the players.} In contrast, the purpose of the current paper is to develop a connection between the structural richness of the data and the {\em computational} complexity of the game. % in the context of mixed strategy Nash equilibria.
%.  The paper shows that there exists a measure of richness for the data, formalized via the crossing number, which connects to game rank.  In \cite{BarmanBEW13}, the message is different than in \cite{consumer}.

This problem at hand is significantly more difficult and technically involved than the ones solved in \cite{consumer} and  \cite{BarmanBEW13}. To highlight this, note that, there are no results in the revealed preference literature characterizing which data sets can be explained as mixed Nash equilibria.  So, our results
represent a contribution to pure economic theory, as well as to
algorithmic game theory. 
 %In particular, note that  most of the economic literature on revealed preference theory deals with single-person decision problems, e.g., see the survey of Varian \cite{varian2006revealed}. There is much less known about revealed preference in game theory.  The approach was first formulated in game theory by \cite{sprum00}, and then extended in, among others, \cite{lee10} and \cite{galambos2009complexity}. However, this literature deals exclusively with pure Nash equilibria.

%which guarantee the existence of rationalizations for which Nash equilibria can be computed efficiently
%\emph{The primary contribution of this paper is to identify structural properties of observed behavior that admit rationalizations in which Nash equilibria can be computed efficiently.}

\section{Preliminaries}
\label{s.prelims}

\textbf{Bimatrix Games.} Bimatrix games are two player games in normal
form.  Such games are specified by a pair of matrices $(A, B)$ of size
$n \times n$, which are termed the payoff matrices for the
players. The first player, also called the row player, has payoff
matrix $A$, and the second player, or the column player, has payoff
matrix $B$. The strategy set for each player is
$[n]=\{1,2,\ldots,n\}$, and, if the row player plays strategy $i$ and
column player plays strategy $j$, then the payoffs of the two players
are $A_{ij}$ and $B_{ij}$ respectively. The player rank of game
$(A,B)$ is defined to be $\min \{ \rank(A), \rank(B) \}$. Our focus on player rank stems from a number of important properties.  In particular, if a game has player rank $k \in \{1, 2, \dots, n\}$, then an equilibrium can be computed in time $O(n^{O(k)})$~\cite{GargJM11,LMM}. %\footnote{}

Let $\Delta^n$ be the set of probability distributions over the set of pure strategies $[n]$. For $x \in \Delta^n$, we define $\Supp(x) := \{i: x_i > 0\}$. Further, $e_i \in \mathbb{R}^n$ is the vector with $1$ in the $i$th coordinate and $0$'s elsewhere, and $u_k \in \Delta^n$ is the uniform distribution over the set $\{1,2, \ldots, k \}$. The players can randomize over their strategies by selecting any probability distribution in $\Delta^n$, called a mixed strategy. When the row and column players play mixed strategies $x$ and $y$ respectively, the expected payoff of the row player is $x^T A y $ and the expected payoff of the column player is $x^TBy$.

Given a mixed strategy $y \in \Delta^n$ for the column player, the
best-response set of the row player, $\BR_r$, is defined as $\BR_r(y)
:= \{ i \in [n] \mid e_i^T A y \geq e_k^T A y \ \ \forall k\in [n]
\}$. Similarly, the
best-response set, $\BR_c$, of the column player (against mixed
strategy $x \in \Delta^n$ of the row player) is defined as $\BR_c(x)
:= \{ j \in [n] \mid x^T B e_j \geq x^T B e_k \ \  \forall k \in [n]\} $.  The best
response sets $\BR_r$ and $\BR_c$ are defined with respect to the
payoff matrices $A$ and $B$. When we want to emphasize this fact we
use superscripts: $\BR_r^A$ and $\BR_c^B$.

%\fede{with this denf it seems you're using argmax to define argmax:
% $\BR_c(x)
%:= \{ j \in [n] \mid j \in \argmax_k x^T B e_k \} $.  }

%Our focus is on (strict) Nash equilibrium, which are defined as follows. Note that $\Supp(x)$ denotes the support of a mixed strategy $x$ and $e_i\in \mathbb{R}^n$ denotes the vector with a $1$ in the $i$th coordinate and $0$s elsewhere.

\begin{definition}[Nash Equilibrium]
A pair of mixed strategies $(x,y)$, $x, y \in \Delta^n$, is a Nash equilibrium if and only if:
\begin{align*}
x^T A y & \geq e_i^T A y \qquad \forall i \in [n] \textrm{ and }\\
x^T B y & \geq x^T B e_j \qquad \forall j \in [n].
\end{align*}
The Nash equilibrium is \emph{strict} if additionally the support and the best-response sets are equal, i.e., $\Supp(x) = \BR_r(y)$ and $\Supp(y) = \BR_c(x)$.  Thus, for strict Nash equilibrium $x^T A y > e_i^T A y$ for all $ i \notin Supp(x)$ and $x^T B y > x^TB e_j $ for all $j \notin \Supp(y)$.
\end{definition}

%Following the revealed preference approach, 
\noindent \textbf{Observed behavior.}
We consider settings in which the payoff matrices $A$ and $B$ are not explicitly
specified. Instead, we are given a
collection of observed mixed strategy pairs. Our framework of observing mixed
strategies is entirely analogous to the theory of individual
stochastic choice: see
for example
\cite{luce1959,mcfadden_zarembka,mcfad91,mcfadden2005revealed}. The
idea is  that repeated observation of pure play allows one to infer a
probability distribution over pure strategies.% In the theory of individual stochastic choice, these models have developed into actual empirical tools that are very heavily used among economists (for example, they % are taught in the first year of every PhD program, and
%are used as a standard tool by most empirical economists). See
%\cite{anderson1992discrete} for an exposition of the theory as used by
%empiricists. We take the game-theoretic analogous route in assuming that mixed strategies are observable.

%{Our framework of observing mixed strategies is entirely analogous to the theory of individual stochastic choice: see  for example \cite{luce1959,mcfadden_zarembka,mcfad91,mcfadden2005revealed}. The idea is  that repeated observation of pure play allows one to infer a probability distribution over pure strategies.} %footnote{

%A key component in the
%revealed preference setting is the specification of the observed
%behavior of the agents, i.e., the data. There is no prior work on
%rationalizing mixed behavior of agents, as we study here, and so we
%adapt and extend a specification from previous work in the context of
%pure Nash
%equilibria~\cite{lee10,galambos2009complexity,BarmanBEW13}. The
%rationale for our choice of data is further discussed in
%Section~\ref{s.intro2}.

%The setting we consider is the following.  We assume that the same
%game is played on multiple occasions by two players and the data set
%consists of observations of the mixed strategies played on each
%occasion.  Specifically, we define a

A data set $\D$ (of size $m$) is a
collection of mixed-strategy pairs, $\D=\{(x_k, y_k) \in \Delta^n
\times \Delta^n \mid k \in [m] \}$. We refer to mixed-strategy pairs
$(x_k, y_k)$ as observations.

%Crucially, no information about the
%payoff matrices is
%known, i.e., the payoffs of the players are not observed.

We denote the set of observed mixed strategies of the row and column
player in a data set $\D$ by $\Ob_r(\D) $ and $\Ob_c(\D)$
respectively: $\Ob_r(\D) := \{ x \in \Delta^n \mid \exists  y \in
\Delta^n  \textrm{ such that } (x,y) \in \D\}$ and $\Ob_c(\D) := \{ y
\in \Delta^n \mid \exists  x \in \Delta^n  \textrm{ such that } (x,y)
\in \D\}$. When there is a single data set under consideration, for
ease of notation, we simply refer to these sets as $\Ob_r$ and
$\Ob_c$.

\section{Warmup result: Rationalization}

Given observed behavior (data) as described in the previous section, the first-order goal of classical revealed preference theory is to understand whether the data is rationalizable. % i.e., whether there exists an instance of the theory of Nash equilibrium that is consistent with the observed data.

%can be explained as Nash equilibria resulting from some bimatrix
%game. More formally, we have the following definition.

\begin{definition}[Rationalizable Data]
A data set $\D=\{ (x_k, y_k)\}_k$ is  \emph{rationalizable} if there exist payoff matrices $A$ and $B$ such that for all $k$, $(x_k, y_k)$ is a strict Nash equilibrium in the game $(A,B)$.
\end{definition}

%If the data set is not rationalizable, then it is said to refute the model of Nash equilibrium, since the players are acting in a way that cannot be explained by the model.
\noindent Strictness is required in the above definition in order to avoid rationalization by trivial games.%, such as games in which all the entries  of the matrices are identical.

%It is also useful to talk about rationalization with respect to the row and column players.  We say that a matrix $A$ rationalizes the row player's strategies (present in the data set) if $A$ satisfies the strict Nash requirement of the row player for all the observations in the data set. In other words, with payoffs from $A$, the following equality holds for all $(x,y) \in \D$: $\Supp(x) = \BR^A_r(y)$ . Similarly, $B$ is said to rationalize the column player's strategies if $\Supp(y) = \BR^B_c(x)$ for all $(x,y) \in \D$.

The first result of this paper comes from the observation that rationalizability of a given data set can be expressed as a linear program, and can thus be determined efficiently. In fact, the use of a linear program is a robust tool, and allows us to determine rationalizability in many more settings, such as:

\begin{itemize}
\item Rationalizability by payoff matrices that satisfy given linear constraints, e.g., enforcing values for certain payoffs, or inequalities between payoffs.
\item Games with multiple players. Each player now has a $m$-dimensional payoff tensor, where $m$ is the number of players.
\item Data sets that consist of observed player behavior in subgames. Each observation thus consists of a tuple $(I_k, J_k, x_k, y_k)$, where $I_k$, $J_k \subseteq [n]$ are subsets of the available pure strategies and $\Supp(x_k) \subseteq I_k$, $\Supp(y_k) \subseteq J_k$.
\item Rationalization of the data set as $\epsilon$-approximate Nash equilibria.
\end{itemize}

For simplicity of notation, we give the linear program for the basic case of two players; this can easily be extended to any of the above cases. The variables in the linear program are the entries of the payoff matrices $A$ and $B$, and the payoffs $\pi_k$, $\pi_k'$ obtained by the players for each observation $(x_k, y_k)$ in the data set.  %The proof of the following proposition is given in Appendix~\ref{s.proofs0}.

%Note that determining whether or not a data set is rationalizable is not a computationally hard task.  In particular, rationalizability can be tested efficiently using the following linear program (LP).  Note that the variables of the linear program are the entries of the payoff matrix $A$ and $B$ along with the payoffs obtained at equilibria $\pi_k$s and the strictness parameter $\delta$.

\begin{proposition} \label{p.lp}
A data set $\D=\{(x_k,y_k) \in \Delta^n \times \Delta^n \mid 1 \leq k \leq m\}$ is rationalizable iff the optimal value of (LP) is strictly greater than zero.
\[
\begin{array}{rll}
\textrm{\emph{maximize} } \ & \delta \tag{LP}\\
\textrm{\emph{subject to } } \  &  (A y_k)_i = \pi_k &  \forall k, \forall i \in \Supp(x_k) \\
& (Ay_k )_j \leq \pi_k - \delta & \forall k, \forall j  \notin \Supp(x_k)  \\
& ( x_k^T B )_i = \pi'_k & \forall k, \forall i \in \Supp(y_k) \\
& (x_k^T B )_j \leq \pi'_k - \delta & \forall k, \forall j  \notin \Supp(y_k)  \\
& 0 \leq A_{i,j}, B_{i,j} \leq 1 & \forall i,j \in [n] \\
%& \pi_k, \pi_k' \geq 0 \quad \forall k \\
& \delta \geq 0.
\end{array}
\]
\end{proposition}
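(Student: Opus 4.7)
The plan is to show that (LP) directly encodes the defining inequalities of strict Nash equilibrium for every observation. The ``if'' direction is then an interpretation of the constraints, while the ``only if'' direction requires a rescaling argument to fit an arbitrary rationalizing pair $(A,B)$ into the box $[0,1]^{n \times n}$ that (LP) imposes.

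For the ``if'' direction, I would take a feasible solution $(A,B,\{\pi_k\},\{\pi'_k\},\delta^*)$ with $\delta^* > 0$. Since $(Ay_k)_i = \pi_k$ for every $i \in \Supp(x_k)$ and $x_k$ is supported on $\Supp(x_k)$, I immediately obtain $x_k^T A y_k = \pi_k$. Combined with $(Ay_k)_j \le \pi_k - \delta^* < \pi_k$ for $j \notin \Supp(x_k)$, this simultaneously verifies the row-player best-response inequality and the identification $\BR_r^A(y_k) = \Supp(x_k)$. A symmetric argument for $B$ gives $\BR_c^B(x_k) = \Supp(y_k)$ and the column-player inequality, so every $(x_k,y_k)$ is a strict Nash equilibrium in $(A,B)$ and $\D$ is rationalized.

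For the converse, I would start with any $(A,B)$ rationalizing $\D$ and set $\pi_k := x_k^T A y_k$, $\pi'_k := x_k^T B y_k$. Define $\delta_0$ to be the minimum over the finitely many strict gaps $\pi_k - (Ay_k)_j$ for $j \notin \Supp(x_k)$ and $\pi'_k - (x_k^T B)_j$ for $j \notin \Supp(y_k)$; strictness of each equilibrium guarantees $\delta_0 > 0$ (and one sets $\delta_0 = +\infty$ if no such constraint exists). The hard part will be coercing $(A,B)$ into $[0,1]^{n \times n}$ without killing this slack. To do so, I would exploit two invariances of the best-response structure: translating $A$ by $c\,\mathbf{1}\mathbf{1}^T$ shifts every $(Ay_k)_i$ by the same constant $c$ (using $\mathbf{1}^T y_k = 1$), so all gaps are preserved; and scaling $A$ by $\lambda > 0$ multiplies the gaps by $\lambda$. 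Translating so that $A \ge 0$ and then dividing by $\max_{i,j} A_{ij}$ places $A$ in $[0,1]^{n \times n}$ while keeping a positive gap. The only degenerate case, when the translated $A$ vanishes, means the original $A$ was constant; strictness then forces $\Supp(x_k) = [n]$ for every $k$, making the row inequality constraints of (LP) vacuous, and $A = 0$ trivially suffices. Handling $B$ symmetrically produces a feasible (LP) solution with objective strictly greater than zero, as required.
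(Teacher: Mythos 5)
Your proposal is correct and follows essentially the same route as the paper's proof: the ``if'' direction reads the strict Nash conditions directly off a feasible solution with $\delta>0$, and the ``only if'' direction normalizes an arbitrary rationalizing pair into $[0,1]^{n\times n}$ by an affine transformation that preserves strict equilibria. You simply spell out the details (the identity $x_k^TAy_k=\pi_k$, the explicit gap $\delta_0$, and the degenerate constant-matrix case) more explicitly than the paper does.
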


\begin{proof}
If the optimal value of the linear program is strictly greater than zero then we have matrices $A$ and $B$ (corresponding to an optimal solution) under which the observations $(x_k,y_k)$ are strict Nash equilibrium. Hence $\D$ is rationalizable.

On the other hand, if $\D$ is rationalizable, say via game $(\hat{A}, \hat{B})$, then we can scale the entries of $\hat{A}$ and $\hat{B}$ (by a large enough positive constant) and add a fixed number to all of them to obtain a (normalized) game $(A,B)$ with entries between $0$ and $1$. Since such an affine transformation preserves the set of strict Nash equilibria, $\D$ is also rationalized by $(A,B)$. In other words, $(A,B)$ gives us a feasible solution to the linear program with objective function value strictly greater than zero. This establishes the claim.
\end{proof}

\section{Main Results: Rationalizations with Low Player Rank}
\label{s.mainresults}

This section includes the main results of the paper, which identify structural properties of data sets that guarantee the existence of rationalizations for which the observed Nash equilibria can be computed efficiently.  To ensure that the observed equilibria can be computed efficiently in the rationalizations we construct rationalizations with low player rank and make use of the following result from \cite{GargJM11}.

%, we focus on
% MOVE TO THE DISCUSSION SECTION:
%While one could perhaps approach this goal by constructing
%rationalizations for which very specific and ad-hoc
%algorithms are computationally efficient, such rationalizations would
%not represent convincing explanations for how the observations
%arose. In particular, for the explanation to be convincing, the
%algorithm for equilibria computation must work independently of
%knowledge of the observations in the data: after all, the agents did not
%have such information when playing the game.  Thus, a more convincing
%approach is to construct rationalizations for which known algorithms
%compute Nash equilibria efficiently. We focus on {\em player rank} as
%a property of a rationalization that guarantees efficient computation
%of equilibria.

%The key feature of player rank that makes it appealing for the exercise in this paper is summarized in the following theorem, which follows from results in \cite{GargJM11}.

\begin{theorem}[\hspace*{-4pt}\cite{GargJM11}]
If the player rank of a bimatrix game is $k$ then all extreme Nash equilibria can be computed in time $O(n^{O(k)})$.
\end{theorem}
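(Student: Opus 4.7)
The plan is to exploit the fact that when $\rank(A) = k$, the row player's expected-payoff vector $A y$ is constrained to lie in a $k$-dimensional subspace, and to use this to enumerate the combinatorial structure of best responses over a family of size $n^{O(k)}$.

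Without loss of generality, I would assume $\rank(A) = k$ (otherwise swap the roles of the players, since player rank is the minimum of the two matrix ranks) and factor $A = U V^T$ with $U, V \in \mathbb{R}^{n \times k}$. For any $y \in \Delta^n$, the row player's expected-payoff vector is $A y = U z$ with $z := V^T y \in \mathbb{R}^k$, so $\BR_r(y)$ depends on $y$ only through the $k$-dimensional statistic $z$. In particular, for a putative Nash equilibrium $(x,y)$, the condition that $\Supp(x) \subseteq \BR_r(y)$ is determined by where $z$ sits in $\mathbb{R}^k$.

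The key step is to enumerate the hyperplane arrangement $\mathcal{H}$ in $\mathbb{R}^k$ formed by the $\binom{n}{2}$ hyperplanes $H_{ij} = \{z : (U_i - U_j)^T z = 0\}$, where $U_i$ denotes the $i$th row of $U$. Standard bounds give that $\mathcal{H}$ has $O(n^{2k})$ faces, enumerable in $n^{O(k)}$ time; on the relative interior of each face $F$ the sign pattern of the differences $(U_i - U_j)^T z$ is constant, so the row player's best-response set $S_F := \argmax_i (U z)_i$ is a fixed subset of $[n]$. For each face $F$, I would then characterize the set of Nash equilibria $(x,y)$ with $V^T y \in \overline{F}$ and $\Supp(x) \subseteq S_F$ by combining (i) the simplex constraints $x, y \in \Delta^n$, (ii) the linear constraints cutting out $\overline{F}$ in terms of $V^T y$, (iii) $x_i = 0$ for $i \notin S_F$, and (iv) the column player's best-response condition $y \in \argmax_{y' \in \Delta^n} x^T B y'$, encoded via linear complementary slackness with respect to $B$.

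The main obstacle I anticipate is arguing that the extreme Nash equilibria in each face can be enumerated in polynomial time and that no extreme NE is missed. My approach would be: once $F$ is fixed, the row player's equal-payoff equations $(A y)_i = (A y)_j$ for $i, j \in S_F$ collapse to linear conditions on $z \in \overline{F}$, and the remaining system in $(x,y)$ together with the column-player complementarity conditions forms a linear complementarity problem whose extreme solutions are vertices of a polytope determined by the active-set choice. Since $x$ is supported in $S_F$ and the row-side constraints have effective dimension bounded in terms of $k$, one can argue that the active-constraint patterns on the column side that can arise at an extreme NE come in at most $n^{O(k)}$ types per face, each yielding a polynomial-time LP. Summing over the $n^{O(k)}$ faces of $\mathcal{H}$ then gives the total running time $n^{O(k)}$. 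The most delicate part will be the boundary analysis, ensuring that extreme NE whose $z = V^T y$ lies on a lower-dimensional face of $\mathcal{H}$ (where best-response ties break the strict ordering) are captured exactly once and that the extreme-point correspondence between the LCP solutions and the NE of $(A,B)$ is faithful.
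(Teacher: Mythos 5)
This statement is imported verbatim from \cite{GargJM11}; the paper contains no proof of it, so there is nothing in-paper to compare your attempt against. Judged on its own terms, your proposal correctly executes the standard first half of the argument but leaves the actual crux asserted rather than proved. The row-side reduction is right: with $A = UV^T$ and $\rank(A)=k$, the payoff vector $Ay = Uz$ depends on $y$ only through $z = V^Ty \in \mathbb{R}^k$, the arrangement of the $\binom{n}{2}$ hyperplanes $(U_i - U_j)^Tz = 0$ has $n^{O(k)}$ faces, and on each face the row player's best-response set $S_F$ is constant. That much is routine.

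The gap is the column player. Player rank is the \emph{minimum} of the two ranks, so $B$ may have full rank $n$, and nothing in your construction reduces the dimension of the column player's side. Once a face $F$ and the set $S_F$ are fixed, the residual problem --- find $x$ supported in $S_F$ and $y$ with $V^Ty \in \overline{F}$ such that $\Supp(y) \subseteq \BR_c(x)$ --- is still a genuine linear complementarity problem whose complementarity pattern (the support of $y$ together with the tight constraints among $B^Tx \le \pi'\mathbf{1}$) ranges a priori over exponentially many possibilities. Your sentence ``one can argue that the active-constraint patterns on the column side that can arise at an extreme NE come in at most $n^{O(k)}$ types per face'' is essentially the theorem restated, and the reason you give for it does not hold up: $|S_F|$ can be as large as $n$, and the column player's best-response condition lives in the $x$-variables, which are not dimension-reduced at all by the factorization of $A$.

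What is missing is a structural fact that transfers the low rank of $A$ into a combinatorial bound relevant to the column side. The natural candidate: at any vertex of the row player's best-response polyhedron $\{y \ge 0 : Ay \le \mathbf{1}\}$, at most $k$ of the payoff constraints can be simultaneously tight and linearly independent (their normals are rows of $A$, which span a space of dimension at most $k$), so at least $n-k$ coordinates of $y$ vanish and every extreme equilibrium has $|\Supp(y)| \le k+1$. This bounds the column-player labels that must be completed by $x$, lets you enumerate the $n^{O(k)}$ candidate supports of $y$ directly, and reduces each candidate to a polynomial-size linear feasibility problem paired with the face constraints you already have. Without an ingredient of this kind, your per-face subproblem has no polynomial bound and the enumeration does not terminate in time $n^{O(k)}$.
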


%This theorem provides a polynomial-time algorithm for computing a Nash
%equilibria when the player rank $k$ is constant, and shows that the
%bound on computation time increases smoothly with player rank beyond
%constant $k$.  Further, since
The algorithm implicit in this theorem computes all extreme
equilibria, and so it  can be used to compute the observations in the data
set rather than simply some arbitrary equilibria.  This fact is
crucial to the exercise, and provides strong motivation for a focus on player rank, since the goal is to explain the specific
observations in the data set.

%The above highlights why player rank is an appealing measure to target when constructing rationalizations, and the results in this section validate this choice by showing that rich classes of data sets can be rationalized by games with small player rank.  In particular, we identify three measures of structural
%complexity in data sets, and show that if a data set has a low value
%on any one of these measures for either player, or can be partitioned into thre%e subsets such that each partition has a low value on any one of these measures, then it has a rationalization with low player rank. While it may be tempting to require that all rationalizations have low rank, such a goal is unreasonable. High rank rationalizations will often exist, even for extremely simple data sets such as a single observation. But the existence of a low rank rationalization is enough to provide a simple explanation for player behavior.

\subsection{Observations from a Low Dimensional Subspace}

The first property we connect to player rank is the dimensionality of
the observed strategies in the data set. Given a finite set $S
\subset \mathbb{R}^n$ of $m$ vectors $s_1, \ldots, s_m$, write
$\dm(S)$ to denote the maximum number of linearly independent vectors
in $S$.  Observed strategies that form a low dimensional subspace are
natural candidates for low player rank rationalizations and,
the following theorem shows that --- independent of the size of the data
set --- if the observations form a low dimensional subspace then they
can be rationalized by a game of low player rank.

\begin{theorem}
\label{thm:ub-count}
If a data set $\D$ is rationalizable then it can be rationalized by a game of player rank at most $ \min\{ \dm( \Ob_r ), \dm(\Ob_c) \}$.
\end{theorem}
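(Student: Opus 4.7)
The plan is to take a rationalizing game $(A,B)$ and replace one of the matrices with a low-rank approximation that preserves precisely the quantities that enter the Nash best-response conditions at the observed strategies. Since player rank is the minimum of the two ranks, we can lower either rank individually; doing each construction symmetrically yields the bound $\min\{\dm(\Ob_r),\dm(\Ob_c)\}$.

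First I would fix a rationalization $(A,B)$ of $\D$ (which exists by hypothesis) and argue as follows to get player rank at most $\dm(\Ob_c)$. Let $d = \dm(\Ob_c)$ and let $P \in \mathbb{R}^{n \times n}$ be the orthogonal projection onto $\Span(\Ob_c) = \Span(y_1,\ldots,y_m)$, so that $\rank(P)=d$ and $P y_k = y_k$ for every observation. Set $A' := AP$ and $B' := B$. Then $A' y_k = A P y_k = A y_k$ for all $k$, so for every $k$ and every $i$,
\[
e_i^T A' y_k = e_i^T A y_k,
\]
and hence the row player's strict best-response conditions (equalities on $\Supp(x_k)$, strict inequalities off it) are preserved. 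The column player's conditions are unchanged because $B$ is unchanged. So $(A',B')$ rationalizes $\D$, and $\rank(A') \leq \rank(P) = d$, giving player rank at most $\dm(\Ob_c)$.

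Symmetrically, letting $Q$ be the orthogonal projection onto $\Span(\Ob_r)$ and setting $B'' := QB$, $A'' := A$, we have $x_k^T B'' = x_k^T Q B = (Q x_k)^T B = x_k^T B$ since $Q$ is symmetric and fixes each $x_k$. Thus $(A'',B'')$ rationalizes $\D$ and has $\rank(B'') \leq \dm(\Ob_r)$. Choosing whichever of the two constructions gives the smaller rank yields a rationalization with player rank at most $\min\{\dm(\Ob_r),\dm(\Ob_c)\}$.

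The only substantive step is the choice of modification: replacing $A$ with $AP$ rather than, say, truncating a singular value decomposition, because we must preserve $Ay_k$ exactly (not approximately) in order to preserve strictness of the Nash conditions. The rest is a direct verification. A minor bookkeeping point to handle is that the definition of rationalization in the paper does not require entries in $[0,1]$, so we do not have to worry about renormalizing $AP$ or $QB$; if one wanted normalized entries, an affine rescaling at the end would preserve strict Nash equilibria and the rank.
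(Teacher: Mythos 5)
Your proof is correct and takes essentially the same approach as the paper: the paper's key lemma constructs $A=\sum_j \hat{A}y_j\gamma_j^T=\hat{A}\hat{Y}\Gamma^T$, i.e., it multiplies $\hat{A}$ on the right by an oblique (dual-basis) projection onto $\Span(\Ob_c)$, where you use the orthogonal projection; both fix every $y_k$, preserve $Ay_k$ exactly, and bound the rank by $\dm(\Ob_c)$. The only other (immaterial) difference is that the paper lowers the ranks of $A$ and $B$ simultaneously in one game, while you build two games and pick the better one.
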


An immediate consequence of the above theorem is that, if a data set $\D$ is rationalizable, then it can be rationalized by a game of player rank at most $|\D|$. Additionally, later, in Theorem \ref{thm:lb}, we prove a lower bound that highlights that this result is tight.

Importantly, Theorem~\ref{thm:ub-count} has rationalizability as one of its hypotheses, and thus implies that, for data with low-dimensional strategies, the computational constraints have no added empirical or observational content. In other words, any low dimensional data set that is rationalizable without computational constraints is also rationalizable with them.

To prove Theorem \ref{thm:ub-count}, a key technical piece is the following lemma about reconstructing the product of an arbitrary matrix and a low-rank matrix. %Proof of this lemma appears in Appendix \ref{s.proofsmainresults}.

\begin{lemma} \label{lem:lowrank}
Suppose $Y \in \mathbb{R}^{n \times m}$ is a matrix of rank $t$. Then for every matrix $\hat{A} \in \mathbb{R}^{n \times n}$, there exists a matrix $A \in \mathbb{R}^{n \times n}$ of rank at most $t$ that satisfies $A Y = \hat{A} Y$.

\end{lemma}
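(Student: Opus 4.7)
The plan hinges on the observation that the condition $AY = \hat{A}Y$ only constrains how $A$ acts on the column space of $Y$, which is a $t$-dimensional subspace of $\mathbb{R}^n$. Thus, there is complete freedom to prescribe $A$ on any complement of this subspace, and by choosing $A$ to vanish on such a complement we can force $\rank(A) \leq t$.

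To implement this, I would take a rank factorization $Y = LR$, where $L \in \mathbb{R}^{n \times t}$ has full column rank (its columns spanning the column space of $Y$) and $R \in \mathbb{R}^{t \times m}$. Since $L$ has full column rank, the Moore--Penrose pseudoinverse $L^{+}$ satisfies $L^{+} L = I_t$, so $P := L L^{+}$ is the orthogonal projection of $\mathbb{R}^n$ onto the column space of $Y$; in particular $P Y = L L^{+} L R = L R = Y$. I would then define $A := \hat{A} P$, which gives $AY = \hat{A} P Y = \hat{A} Y$ as required, and $\rank(A) \leq \rank(P) = t$ by submultiplicativity of rank together with the fact that $P$ is a rank-$t$ projector.

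The construction is short, so there is no substantial technical obstacle; the key conceptual step is simply to recognize that only the column-space action of $A$ is pinned down, which is precisely what leaves enough slack in the remaining $n-t$ dimensions to push the rank of $A$ down to $t$. A fully equivalent route would be to start from a thin SVD $Y = U_t \Sigma_t V_t^{T}$ and set $A = \hat{A} U_t U_t^{T}$; I would prefer the rank-factorization presentation because it avoids invoking the SVD and makes it transparent that the argument works over an arbitrary field.
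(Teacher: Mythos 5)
Your proof is correct and takes essentially the same route as the paper's: the paper selects $t$ independent columns $\hat{Y}$ of $Y$, constructs a left inverse $\Gamma^T \hat{Y} = I_t$, and sets $A = \sum_j \hat{A} y_j \gamma_j^T = \hat{A}\,\hat{Y}\Gamma^T$, which is exactly $\hat{A}$ composed with a rank-$t$ projector onto the column space of $Y$ --- the same construction you obtain via $A = \hat{A} L L^{+}$, differing only in that you use the orthogonal projector where the paper allows an arbitrary left inverse. There is no gap; the two arguments are interchangeable.
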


%\begin{comment}%%%%%%%%%%%
\begin{proof}
Let $\{ y_1, y_2, \ldots, y_t \}$ be a set of linearly independent columns in $Y$, and define $\hat{Y}$ as the $n \times t$ matrix $[y_1 ~ y_2 ~ \ldots ~ y_t]$. Since each of the columns of $\hat{Y}$ are linearly independent, we can obtain a matrix $\Gamma$ of size $n \times t$ that satisfies $\Gamma^T \, \hat{Y} = I_t$. Thus if we denote the $i$th column of $\Gamma$ by $\gamma_i$, then for all $j \in [t]$,

\begin{align*}
\gamma_j^T y_j & = 1 \qquad \textrm{ and} \\
\gamma_j^T y_k & = 0 \qquad \forall k \in [t] \setminus \{j\} \, .
\end{align*}

\noindent We define matrix $A$ as the following sum of $t$ rank-$1$ outer products:

\begin{align*}
A & = \sum_{j=1}^t \hat{A} y_j \gamma_j^T \, .
\end{align*}

\noindent By construction, the rank of $A$ is at most $t$. Further, for all $y_j$ for $j \in [t]$:

\begin{align}
\label{eq:no-hat}
A \, y_j & = \sum_{k=1}^t \hat{A} y_k \gamma_k^T y_j ~ = ~ \hat{A} \, y_j \, .
\end{align}

Since any column $y$ of matrix $Y$ can be expressed as a linear combination of the columns of $\hat{Y}$, we can write $y = \sum_{j=1}^t \lambda_j y_j$. Then $\hat{A} \, y_i$ is given by

\begin{align*}
\hat{A} \, y  & = \hat{A} \left( \sum_{j=1}^t \lambda_j y_j \right)  =  \sum_{j=1}^t \lambda_j \hat{A}y_j = \sum_{j=1}^t \lambda_j Ay_j  \\
&  = A \left( \sum_{j=1}^t \lambda_j y_j \right) = Ay \, .
\end{align*}

\noindent where the third equality is obtained from~(\ref{eq:no-hat}). Overall we have the desired claim, $AY = A\hat{Y}$. 

\end{proof}

%\end{comment}%%%%%%%%%%

Using this lemma, Theorem \ref{thm:ub-count} can be established as follows.

\begin{proof}[Proof of Theorem~\ref{thm:ub-count}]
Consider a game $(\hat{A}, \hat{B})$ that rationalizes the data set $\D=\{ (x_k,y_k) \}_{k=1}^m$. Write $X$ ($Y$) to denote the matrix whose $k$th row (column) is equal to $x_k$ ($y_k$), for all $k \in [m]$.  Note that $\rank(X) = \dm(\Ob_r)$ and $\rank(Y) = \dm(\Ob_c)$. By Lemma~\ref{lem:lowrank}, there exist matrices $A$, $B$ so that $\rank(A) \le \rank(Y)$, $\rank(B) \le \rank(X)$, and $AY = \hat{A}Y$, $XB = X \hat{B}$. Then $(A,B)$ is the rationalization required by the theorem. We have already shown that $A$ and $B$ are of the required rank. To see that $(A,B)$ rationalize $\D$, note that since $AY = \hat{A}Y$, for all $(x,y) \in \D$ we have $\BR_r^A(y) = \BR_r^{\hat{A}}(y)$. Hence $\Supp(x)  = \BR_r^A(y)$. Similarly, since $XB = X \hat{B}$, $\Supp(y) = \BR_c^B(x)$. Hence $(x,y)$ is a strict Nash equilibrium in $(A,B)$. 
\end{proof}

\subsection{Observations with Small Support Size}

The second structural property of the data set we consider is the support size
of the observations. In spirit, the following theorem complements the
result of Lipton et al.~\cite{LMM} wherein they establish that if the
rank of both the payoff matrices is low then the game contains a
small-support equilibrium. The following result highlights that there is a
connection in the other direction as well.

\begin{theorem}
\label{thm:ss}
Let $\D=\{(x_k,y_k) \in \Delta^n \times \Delta^n \mid 1 \leq k \leq m\}$ be a data set in which $|\Supp(x_k)| \leq s$ for all $k \in [m]$ or $|\Supp(y_k)| \leq s$ for all $k\in [m]$. If the observed strategies $\Ob_r(\D)$ and $\Ob_c(\D)$ are generic then $\D$ can be rationalized by a game with player rank $\leq 2s + 1$.
\end{theorem}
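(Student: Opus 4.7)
By symmetry (swap the roles of the two players), I will assume $|\Supp(x_k)| \leq s$ for all $k \in [m]$, and aim to build a rationalization $(A,B)$ with $\rank(A) \leq 2s+1$; the matrix $B$ will be constructed separately, without a rank constraint. Write $S_k := \Supp(x_k)$ and let $C_k \subset \mathbb{R}^n$ denote the open cone
\[
C_k = \{v \in \mathbb{R}^n : v_i = v_{i'} > v_j \text{ for all } i, i' \in S_k \text{ and } j \notin S_k\}.
\]
The row-player's Nash condition for observation $k$ is then equivalent to $Ay_k \in C_k$. Since $A$ may always be taken to vanish on a complement of $\Span\{y_1,\ldots,y_m\}$, one has $\rank(A) = \dm\Span\{Ay_k : k \in [m]\}$. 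Hence it suffices to exhibit a $(2s+1)$-dimensional subspace $W \subseteq \mathbb{R}^n$ and vectors $c_k \in W \cap C_k$ that arise as $Ay_k$ for a single linear map $A : \mathbb{R}^n \to W$.

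The first step is to pick $W$. Let $V_k := \{v \in \mathbb{R}^n : v_i = v_j \text{ for all } i,j \in S_k\}$, which has codimension at most $s-1$. A dimension count shows that for a generic $(2s+1)$-dimensional subspace $W$, $\dm(W \cap V_k) \geq (2s+1) + (n-s+1) - n = s+2$ for every $k$. Since $C_k$ is an open cone in $V_k$ that contains $\mathbf{1}_{S_k}$, and $W \cap V_k$ is a generic subspace of $V_k$ of dimension at least $s+2$, genericity yields a nonempty intersection $W \cap V_k \cap C_k$ for each $k$. This produces candidate vectors $c_k$, but only one at a time.

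The second step (the main one) is to realize the $c_k$'s as the images of a single linear map, i.e., to find $A$ with $Ay_k \in C_k$ for every $k$. Parametrizing $A$ via its action on a basis of $\Span\{y_k\}$ and requiring the image to sit inside the fixed $W$, each condition $Ay_k \in C_k$ becomes an open condition on the parameter space. The genericity of $\Ob_c(\D)$ is used twice: it ensures that the linear dependencies among the $y_k$'s are non-degenerate, so that the cone constraints from different $k$'s decouple and do not force incompatible choices; and it supplies the dimensional slack (the $(s+2)$-dimensional intersection $W \cap V_k$) needed to keep each feasible region open inside $W$. The resulting $A$ satisfies $\rank(A) \leq \dm W = 2s+1$.

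Finally, construct $B$ via the LP of Proposition~\ref{p.lp} restricted to the column-player's inequalities, with $A$ already fixed; feasibility follows because $\Ob_r(\D)$ consists of sparse vectors in generic position, so that the row-$i$ constraints of $B$ for $i \in \bigcup_k S_k$ can be chosen independently enough to put each $x_k^T B$ in the cone of vectors maximized on $\Supp(y_k)$. The pair $(A,B)$ is then the desired rationalization. The key technical obstacle is precisely the second step: formalizing how the genericity of the $y_k$'s ensures that a single linear map with image in a fixed $(2s+1)$-dimensional subspace $W$ can simultaneously route each $y_k$ into its own cone $C_k$, where the bound $2s+1$ arises from the codimension $s-1$ of each $V_k$ plus the $s+2$ degrees of freedom we require to pick $c_k$ inside the open cone.
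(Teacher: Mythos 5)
Your overall reduction is the right one, and it matches the skeleton of the paper's argument: since $\Ob_c(\D)$ is generic (so the $y_k$ are linearly independent, with $m\le n$), a single linear map $A$ can send each $y_k$ to \emph{any} prescribed target $c_k$ --- the paper realizes this as $A = PV$ with $Vy_k = e_k^{(m)}$ --- so the whole problem reduces to choosing targets $c_k \in C_k$ that span a $(2s+1)$-dimensional space; and $B$ indeed needs no rank bound. But this means the step you flag as the ``main obstacle'' (routing all the $y_k$ simultaneously) is actually the easy part, while the step you dispatch with a genericity appeal --- finding one $(2s+1)$-dimensional $W$ meeting every cone $C_k$ --- is where the real content lies, and your argument for it is wrong. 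The set of subspaces $W$ meeting a fixed nonempty open cone is open and nonempty in the Grassmannian, but it is \emph{not} dense (a generic line in $\mathbb{R}^2$ misses a given thin open cone), so ``genericity yields a nonempty intersection $W\cap V_k\cap C_k$'' does not follow, and the intersection of these $m$ open, non-dense conditions can be empty. Concretely, take $s=1$ and $x_k=e_k$ for $k\in[n]$: you need a $3$-dimensional $W=\Span\{u,w,z\}$ containing, for each $i$, a vector whose unique largest coordinate is the $i$th. Coordinate $i$ is achievable as a unique argmax over $W$ iff $(u_i,w_i,z_i)$ is a vertex of the convex hull of the $n$ points $\{(u_j,w_j,z_j)\}_{j\in[n]}$, and for generic $u,w,z$ only $o(n)$ of these points are hull vertices --- so almost every $3$-dimensional $W$ fails. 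The dimension count $\dm(W\cap V_k)\ge s+2$ is a red herring: a large intersection with the linear hull of a cone says nothing about meeting the cone itself.

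What the paper supplies, and what is missing from your proposal, is the explicit, highly non-generic choice of $W$: the span of the moment vectors $(1^j,2^j,\ldots,n^j)^T$ for $0\le j\le 2s$. The target $c_k=(p_k(1),\ldots,p_k(n))^T$ with $p_k(x)=-\prod_{i\in\Supp(x_k)}(x-i)^2$ lies in this $(2s+1)$-dimensional space (Lemma~\ref{prop:polymat}) and attains its maximum exactly on $\Supp(x_k)$. This is the cyclic-polytope/moment-curve phenomenon --- every subset of at most $s$ of the points $1,\ldots,n$ on the moment curve can be ``cut off'' --- and it is precisely the special structure that a generic $W$ lacks. Your treatment of $B$ (construct it separately, no rank constraint, using genericity of $\Ob_r(\D)$) is fine in spirit and consistent with what the paper leaves implicit.
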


Note that, later, Theorem \ref{thm:lb} highlights that the bound in Theorem \ref{thm:ss} is tight to within a factor of 2.

Our proof of Theorem \ref{thm:ss} uses a construction based on polynomials, and the following lemma is a key technical piece in the argument.  The lemma states that if for all $j \in [m]$, the $j$th column of an $n \times m$ matrix $M$ is obtained by evaluating a degree $d$ polynomial $p_j$ at $1,2,\ldots n$ (i.e., the $j$th column of $M$ is equal to $(p_j(1), p_j(2), \ldots, p_j(n))^T$), then the rank of $M$ is at most $d+1$.
%, proven in Appendix~\ref{s.proofsmainresults},
\begin{lemma}
\label{prop:polymat}
Let $p_1, p_2, \ldots, p_m$ be $m$ univariate polynomials over $\mathbb{R}$, and suppose that the degree of each of them is at most $d$. If the $(i,j)$th entry of an $n \times m$ matrix $M$ is equal to $p_j(i)$, for all $i \in [n]$ and $j \in [m]$, then the rank of $M$ is at most $d+1$.
\end{lemma}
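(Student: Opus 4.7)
The plan is to express the matrix $M$ as a product of two matrices, one of which has only $d+1$ columns, immediately bounding the rank.

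First I would write each polynomial in the monomial basis: $p_j(x) = \sum_{\ell=0}^{d} c_{\ell j} x^\ell$ for coefficients $c_{\ell j} \in \mathbb{R}$ (padding with zero coefficients if $\deg p_j < d$). Then by the definition of $M$, its $(i,j)$th entry satisfies
\[
M_{ij} \;=\; p_j(i) \;=\; \sum_{\ell=0}^{d} c_{\ell j}\, i^\ell.
\]
Next I would factor this identity as a matrix product $M = V C$, where $V \in \mathbb{R}^{n \times (d+1)}$ is the Vandermonde-like matrix with $V_{i\ell} = i^\ell$ (for $i \in [n]$, $\ell \in \{0,1,\ldots,d\}$), and $C \in \mathbb{R}^{(d+1) \times m}$ has entries $C_{\ell j} = c_{\ell j}$.

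Finally, since $V$ has only $d+1$ columns, $\rank(V) \le d+1$, and consequently $\rank(M) = \rank(V C) \le \rank(V) \le d+1$, which is exactly what the lemma asserts. Equivalently, one can view this as saying that every column of $M$ lies in the span of the $d+1$ fixed vectors $v_\ell := (1^\ell, 2^\ell, \ldots, n^\ell)^T$ for $\ell = 0, 1, \ldots, d$, so the column space has dimension at most $d+1$.

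There is no real obstacle here: the argument is essentially a one-line observation that evaluation of degree-$d$ polynomials at any finite set of points is a linear map from the $(d+1)$-dimensional space of such polynomials into $\mathbb{R}^n$, hence its image has dimension at most $d+1$. The only mild care needed is to allow some $p_j$ to have degree strictly less than $d$, which is handled automatically by zero-padding the coefficient vector in the factorization above.
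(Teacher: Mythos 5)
Your proof is correct and is essentially the same as the paper's: the paper writes $M$ as a sum of $d+1$ rank-one outer products $(1^\ell,2^\ell,\ldots,n^\ell)^T(a_\ell^{(1)},\ldots,a_\ell^{(m)})$ and invokes subadditivity of rank, which is just your factorization $M=VC$ expanded term by term. No gap; the two arguments differ only in presentation.
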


%\begin{comment}%%%%%%%%%%%
\begin{proof}
Write $p_j(x) = a_d^{(j)} x^d + a_{d-1}^{(j)} x^{d-1} + \ldots + a_1^{(j)} x + a_0$ for all $j \in [m]$. $M$ can be expressed as a sum of $d+1$ outer products:
\begin{align*}
M & = \begin{pmatrix} 1^d \\ 2^d \\ \vdots  \\ n^d \end{pmatrix} \begin{pmatrix} a_d^{(1)} & a_d^{(2)} & \cdots & a_d^{(m)} \end{pmatrix} + \begin{pmatrix} 1^{d-1} \\ 2^{d-1} \\ \vdots  \\ n^{d-1} \end{pmatrix} \begin{pmatrix} a_{d-1}^{(1)} & a_{d-1}^{(2)} & \cdots & a_{d-1}^{(m)} \end{pmatrix} + \ldots \\  & \ \    \ldots + \begin{pmatrix} 1 \\ 1 \\ \vdots  \\ 1 \end{pmatrix} \begin{pmatrix} a_0^{(1)} & a_0^{(2)} & \cdots & a_0^{(m)} \end{pmatrix}.
\end{align*}

Note that the rank of an outer product is one and the rank of the sum of two matrices satisfies $\rank(X + Y ) \leq \rank(X) + \rank(Y)$.  Hence the rank of $M$ is no more than $d+1$. 
\end{proof}
%\end{comment}%%%%%%%%%%%

Using this lemma, Theorem \ref{thm:ss} can be established as follows.

\begin{proof}[Proof of Theorem~\ref{thm:ss}]
We prove the claim for the case in which the mixed strategies of the row player in the data set $\D=\{(x_k,y_k) \in \Delta^n \times \Delta^n \mid 1 \leq k \leq m \}$ are of support size at most $s$. A construction similar to the one presented below takes care of the alternate case wherein $|\Supp(y_k)| \leq s$ for each $k \in [m]$.

We consider a polynomial $p_k$ that satisfies $\argmax_x p_k(x) = \Supp(x_k)$ and has degree $2 |\Supp(x_k)|$. In particular,
\begin{align*}
p_k(x) & := - \prod_{i \in \Supp(x_k) } (x - i)^2.
\end{align*}

Say $\Supp(x_k) = \{i_1, i_2, \ldots , i_s\} \subset [n]$, then the polynomial $p_k$ vanishes exactly at $i_1, i_2, \ldots, i_s$ and is negative elsewhere. Hence,  $\Supp(x_k)$ is the set of points at which $p_k$ attains its maximum value. In addition, the degree of $p_k$ is $2 |\Supp(x_k)|$.

Consider the $n \times m$ matrix $P$ in which the $k$th column is equal to $(p_k(1), p_k(2),$ ...$,p_k(n))^T$. By construction, for all $k \in [m]$, degree of the polynomial $p_k$ is no more than $2s$. Therefore, Lemma~\ref{prop:polymat} implies that the rank of $P$ is at most $2s+1$. Moreover, the set of the largest components of the $k$th column of $P$ (i.e., $\argmax_i P_{i,k}$) is exactly equal to $\Supp(x_k)$. Since, $\argmax_{i \in [n]} P_{i,k}  = \argmax_{i \in [n]} $  $p_k(i) \ = \Supp(x_k).$

%\begin{align*}

%\end{align*}

Recall that the mixed strategies in $\Ob_c(\D)$ are generic. Therefore, we can find an $m \times n$ matrix $V$ that satisfies the following equality for all $y_k \in \Ob_c( \D)$: $V y_k = e_k^{(m)}$. Here $e_k^{(m)}$ is the $m$-dimensional vector with a $1$ in the $k$th coordinate and $0$s elsewhere.

Set the payoff matrix of the row player $A = PV$. Rank of the product of two matrices satisfies: $\rank(XY) \leq \min \{ \rank(X), \rank(Y) \}$. Hence $\rank(A) \leq 2s+1$.

For all $(x_k,y_k) \in \D$, we have $A y_k = P e_k^{(m)} =(p_k(1), p_k(2), \ldots, p_k(n))^T$. Hence, the set of the largest components of the vector $A y_k$ is equal to $\Supp(x_k)$. Overall, under the payoff matrix $A$, we have $\BR_r(y_k) = \Supp(x_k)$, for all $(x_k, y_k) \in \D$. That is, $A$ rationalizes the mixed strategies of the row player. 
\end{proof}

\subsection{Observations with Low Chromatic Number}

The third, and final, structural property of data sets that we consider is the chromatic number.  Intuitively, the chromatic number quantifies the degree of intersection between the observed mixed strategies, and hence it is a relevant measure of the structural complexity of data.

For a data set $\D$, we define the row chromatic number $\rcn(\D)$
and the column chromatic number $\ccn(\D)$ as the chromatic numbers of
graphs $G_r$ and $G_c$, defined as follows. For the row
chromatic number, $\rcn(\D)$, construct graph $G_r$ with a vertex
corresponding to each observation in $\Ob_r$. For distinct observations $(x,y)$
and $(x',y')$ in $\D$, if $\Supp(x) \cap \Supp(x') \neq \emptyset$
then the graph $G_r$ has an edge between the corresponding vertices. Then
set $\rcn(\D) = \chi(G_r)$, i.e., the chromatic number of graph
$G_r$. The column chromatic number is defined similarly using
intersections $\Supp(y) \cap \Supp(y')$. The chromatic number of the
data set, $\cn(\D)$, is defined to be the minimum of $\rcn(\D)$ and
$\ccn(\D)$.

\begin{theorem}
\label{thm:cn}
Let $\D$ be a data set with chromatic number equal to $\cn(\D)$. If the observed mixed-strategy sets $\Ob_r(\D)$ and $\Ob_c(\D)$ are generic then $\D$ can be rationalized by a game of player rank at most $2\cn(\D)$.
\end{theorem}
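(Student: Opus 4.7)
The plan is to build, on the row side, a payoff matrix $A$ of rank at most $2\rcn(\D)$ that rationalizes the observed row strategies in the sense $\BR_r^A(y_k) = \Supp(x_k)$, and by a symmetric argument to build $B$ of rank at most $2\ccn(\D)$ on the column side. Since $\cn(\D) = \min\{\rcn(\D),\ccn(\D)\}$, this produces a game $(A,B)$ rationalizing $\D$ with player rank at most $2\cn(\D)$. The remainder of the plan focuses on the row-side construction.

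First I would invoke the definition of $\rcn(\D) = c$ to partition the observations into $c$ color classes $C_1,\ldots,C_c$ in which the row supports $S_k := \Supp(x_k)$ are pairwise disjoint. The strategy is to rationalize each class with only a rank-$2$ contribution and add them together. Within $C_i$, assign distinct nonzero scalars $\{\lambda_k : k \in C_i\}$; disjointness then lets me define an unambiguous labeling function $\ell_i : [n] \to \mathbb{R}$ sending $j \in S_k$ (with $k \in C_i$) to $\lambda_k$ and sending all other $j$ to $0$. The driving identity is that $j \mapsto -(\ell_i(j)-\lambda_k)^2$ attains its maximum value $0$ exactly on $S_k$.

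Next I would expand $-(\ell_i(j)-\lambda_k)^2 = -\ell_i(j)^2 + 2\lambda_k\ell_i(j) - \lambda_k^2$ and throw away the $-\lambda_k^2$ term, which is constant in $j$ and hence harmless for $\argmax_j$. What remains is linear in $\lambda_k$, so it is realized by just two outer products per class. Introduce $\mathbf{u}_i, \mathbf{v}_i \in \mathbb{R}^n$ with $(\mathbf{u}_i)_j := -\ell_i(j)^2$ and $(\mathbf{v}_i)_j := 2\ell_i(j)$, and assemble
\[
P \;:=\; \sum_{i=1}^{c}\bigl(\mathbf{u}_i\,\chi_i^T + \mathbf{v}_i\,\mu_i^T\bigr) \;\in\; \mathbb{R}^{n \times m},
\]
where $\chi_i \in \{0,1\}^m$ is the indicator of $C_i$ and $\mu_i \in \mathbb{R}^m$ has $k$-th entry $\lambda_k$ on $C_i$ and $0$ elsewhere. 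This is the sum of $c$ rank-$\le 2$ matrices, so $\rank(P) \le 2c$. Following the template of the proof of Theorem~\ref{thm:ss}, genericity of $\Ob_c(\D)$ yields a $V \in \mathbb{R}^{m \times n}$ with $Vy_k = e_k^{(m)}$ for every $k$; setting $A := PV$ gives $\rank(A) \le 2c$ and $Ay_k = \mathbf{u}_{i(k)} + \lambda_k\mathbf{v}_{i(k)}$, where $i(k)$ is the color class of $k$. A case analysis on whether $j \in S_k$, $j \in T_{i(k)} \setminus S_k$, or $j \notin T_{i(k)} := \bigcup_{k' \in C_{i(k)}} S_{k'}$ then verifies $\argmax_j (Ay_k)_j = S_k$.

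The main obstacle is rank-efficiency. A direct quadratic-in-$\lambda_k$ encoding of the ``peak at $S_k$'' condition would cost a rank-$3$ contribution per color class and thus give only $3c$ overall; the crucial step that must work is the removal of the $-\lambda_k^2$ term, which is legitimate exactly because it is $j$-independent and does not move the $\argmax$. The remaining pieces (bounding $\rank(P)$, producing $V$ via genericity, and mirroring the construction on the column side to obtain $B$ with $\rank(B) \le 2\ccn(\D)$) are routine.
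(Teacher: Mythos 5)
Your construction is correct and is essentially the paper's: you color the observations by $\rcn(\D)$, encode each class's ``peak at $\Supp(x_k)$'' condition via the completed square $-(\ell_i(j)-\lambda_k)^2$ split into two rank-one pieces (exactly the paper's $u\mathbf{1}_n^T + 2wf^T$ with labels $\sigma(i)$ in place of your $\lambda_k$), and use genericity of $\Ob_c(\D)$ to route each $y_k$ to its column. The only cosmetic difference is that you assemble one global matrix $P$ and a single right inverse $V$ rather than the paper's per-class rank-$2$ matrices $A_j$ combined with class-specific projectors $V_j$; the mathematical content is identical.
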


Note that, later, Theorem \ref{thm:lb} highlights that the bound in Theorem \ref{thm:ss} is tight to within a factor of 2.

Importantly, like the case of support size, the bound on the player rank in Theorem \ref{thm:cn} is not exactly dependent on the size of the data set, but rather only on the ``richness'' of the observations in terms of the structure of the underlying graph.

Of course, in general the chromatic number of a graph is hard to compute. However, an easy upper bound is the maximum degree of any vertex in $G_r$ and $G_c$ plus one, which then can be interpreted as follows: $ \rcn(\D)  \le \max_{(x,y) \in \D} |\{(x',y') \in \D: \Supp(x) \cap \Supp(x') \neq \phi \}| $ and
$ \ccn(\D) \le \max_{(x,y) \in \D} $  $|\{(x',y') \in \D: \Supp(y) \cap \Supp(y') \neq \phi \}|$. Though these bounds provide intuition, it is obvious the chromatic numbers can be much less than these upper bounds, e.g., if the graph $G_r$ is a star.

The proof of Theorem \ref{thm:cn} starts by focusing on data sets where the row chromatic number is one, i.e., the supports of all the observations for the row player are pairwise disjoint.  In this case we show, in the following Lemma, that the data set can be rationalized by a game with row player rank 2. We start with a proof of the initial lemma. %Here we include only proof of the initial lemma.  Proof of Theorem \ref{thm:cn} appears in Appendix \ref{s.proofsmainresults}.

\begin{lemma}
\label{lem:cn}
Let $\D'$ be a data set with row chromatic number $\rcn(\D') = 1$. If the set of observations $\Ob_c(\D')$ is generic then there exists a rank $2$ matrix $A'$ that rationalizes the row player's strategies in $\D'$.
\end{lemma}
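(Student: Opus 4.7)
Since $\rcn(\D') = 1$, the supports $S_k := \Supp(x_k)$, $k = 1, \ldots, m$, are pairwise disjoint nonempty subsets of $[n]$. The plan is to construct $A'$ as a sum of two rank-one outer products, $A' = u v^T + w z^T$, so that $A' y_k = (v^T y_k)\, u + (z^T y_k)\, w$ always lies in the two-dimensional span of $u$ and $w$. I will pick $u, w \in \mathbb{R}^n$ and scalars $\alpha_k, \beta_k$ with $\argmax_i(\alpha_k u_i + \beta_k w_i) = S_k$ for each $k$, and then invoke the genericity of $\Ob_c(\D')$ to realize $v^T y_k = \alpha_k$ and $z^T y_k = \beta_k$.

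The key geometric observation is that, viewing each pure strategy $i$ as a point $(u_i, w_i) \in \mathbb{R}^2$, the set $\argmax_i(\alpha u_i + \beta w_i)$ is exactly the set of indices whose points lie on the face of the convex hull exposed by the direction $(\alpha, \beta)$. To make each pairwise-disjoint $S_k$ such an exposed face, I would fix distinct positive scalars $c_1, \ldots, c_m$ (e.g., $c_k = k$) and set $(u_i, w_i) = (c_k, -c_k^2)$ for every $i \in S_k$, collapsing each support onto a single point of the strictly concave parabola $w = -u^2$. Any remaining indices $i \notin \bigcup_k S_k$ are sent to a ``junk'' location such as $(0, -M)$ with $M$ large enough to keep them out of every argmax. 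Taking $(\alpha_k, \beta_k) = (2 c_k, 1)$ then gives value $2 c_k c_{k'} - c_{k'}^2 = c_k^2 - (c_k - c_{k'})^2$ at $(c_{k'}, -c_{k'}^2)$, which is uniquely maximized at $k' = k$; and value $-M < c_k^2$ at the junk point. Hence $\argmax_i(2 c_k u_i + w_i) = S_k$, as desired.

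For the final assembly, the genericity of $\Ob_c(\D')$ (read as linear independence of $y_1, \ldots, y_m$) yields, exactly as in the proof of Theorem~\ref{thm:ss}, a matrix $V \in \mathbb{R}^{m \times n}$ with $V y_k = e_k^{(m)}$. With $c = (c_1, \ldots, c_m)^T$ and $\mathbf{1} \in \mathbb{R}^m$ the all-ones vector, set
\[
A' \;=\; \bigl( 2\, u\, c^T + w\, \mathbf{1}^T \bigr)\, V.
\]
This is a sum of two rank-one matrices, so $\rank(A') \le 2$, and $A' y_k = 2 c_k u + w$ has component-wise argmax equal to $S_k$, giving $\BR_r^{A'}(y_k) = \Supp(x_k)$ for every observation.

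The main step to justify is the geometric collapse from ``polynomial degree $2s$'' (which gives rank $2s + 1$ in Theorem~\ref{thm:ss}) all the way down to rank $2$: pairwise disjointness of the supports is what lets every $S_k$ collapse to a single vertex of a strictly convex two-dimensional configuration, so that one tangent direction per observation suffices. Once that picture is in place, everything else is the same linear-algebraic construction as in Theorem~\ref{thm:ss}, with only two outer products instead of $2s + 1$.
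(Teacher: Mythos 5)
Your proof is correct and is essentially the paper's own construction: the paper places each disjoint support at the point $(\sigma(i), -\sigma(i)^2)$ on the same concave parabola and uses the identity $-\sigma^2 + 2k\sigma = k^2 - (k-\sigma)^2$ with a vector $f$ satisfying $f^T y_k = k$, which matches your tangent-direction argument up to swapping the roles of $u$ and $w$ and using $\mathbf{1}_n^T y_k = 1$ in place of one pass through $V$. The only cosmetic difference is that the paper sends unused rows to $\sigma(i) = m+1$ (another parabola point) rather than a separate junk location, so no further justification is needed.
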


\begin{proof}
Say $\D'$ consists of $m$ observations $(x_1,y_1), \ldots, (x_m, y_m)$. Since the support sets of the strategies of the row player are disjoint, any (pure) strategy $i \in [n] $ of the row player is in at most one such support set. For pure strategy $i \in [n]$, let $\sigma(i) := k$ such that $i \in \Supp(x_k)$. If row $i$ is not in the support of any strategy, let $\sigma(i) = m+1$.

We construct three vectors $u$, $w$ and $f$ in $\mathbb{R}^n$. Values are assigned to the components of the vectors $u$ and $w$ directly: for $i \in [n]$, entry $u_i = -\sigma(i)^2$, and $w_i = \sigma(i)$. We define vector $f$ so that for all $y_k \in \Ob_c(\D')$, $f^T y_k = k$. Since the set of observations $\Ob_c(\D')$ are generic, such a vector $f$ exists and can be obtained. Matrix $A'$ is then defined as follows: $A' := u \mathbf{1}_n^T + 2 w f^T$, where $\mathbf{1}_n$ is the $n$-column vector consisting of all 1's.

Since $A'$ is the sum of two outer products, it has rank $2$. We show that with payoffs from $A'$ the strict Nash requirement for the row player, $\Supp(x_k) = \BR_r(y_k)$, is satisfied for all $(x_k,y_k) \in \D'$.  Hence, we get the desired lemma.

For any $y_k \in \Ob_c(\D')$ , by our construction, $A' y_k = u + 2 w f^T y_k = u + 2 k w$. For a fixed $k$, consider the $j$th component of the vector $A' y_k$, which by the construction is $- \sigma(j)^2 + 2k \sigma(j)$. Note that this expression is maximized when $\sigma(j) =k $ and is strictly less for other values of $\sigma(j)$. Further, by construction, if $\sigma(j) = k$ then $j \in \Supp(x_k)$. Thus, the maximum components of $A' y_k$ are exactly those that correspond to the support of $x_k$, and hence under payoff matrix $A'$, $\BR_r(y_k) = \Supp(x_k)$. 
\end{proof}

We show now how this lemma can then be used to construct games with low player rank for data sets with larger chromatic numbers.

\begin{proof}[Proof of Theorem \ref{thm:cn}]
We constructively show that there exists a matrix $A$ that rationalizes the mixed strategies of the row player in $\D$ and has rank no more than $2 \rcn(\D)$. Similarly, we can construct a payoff matrix $B$ of rank at most $2\ccn(\D)$ for the column player. This establishes the existence of the game $(A,B)$ that rationalizes $\D$ with the required player rank.

Let $t$ be the row chromatic number of the data set, i.e., $t = \rcn(\D)$, and let $\chi(G_r)$ be the graph coloring that defines the row chromatic number. We partition the observations in data set $\D$ into $t$ sets $\D_1, \D_2, \ldots, \D_t$ according to the color assigned to the vertex corresponding to each observation, so that observations with the same color are in the same partition. Note that by this technique, for all observations within the same partition, the supports of the observations for the row player are disjoint. 

By Lemma~\ref{lem:cn}, for each data set $\D_j$, we can obtain a rank-2 matrix $A_j$ so that $\BR_r(y_i) = \Supp(x_i)$ with $A_j$ as the payoff matrix. In order to combine these matrices, for $j \in [t]$, we define matrix $V_j$ as satisfying the following property:

\begin{align*}
V_j y & = y \textrm{ for $y \in \Ob_c(\D_j)$, and } \\
V_j y & = \mathbf{0} \textrm{ otherwise. }
\end{align*}

\noindent Since the set of strategies of the players are generic, we can obtain such matrices. We then define the payoff matrix $A$ as
$
A = \sum_{j=1}^t A_j V_j \, .
$
Since each $A_j$ is of rank 2, matrix $A$ is of rank $2t = 2 \rcn(\D')$. To see that $A$ rationalizes the data set $\D$, note that for $j \in [t]$ and $(x,y) \in \D_k$,
$
Ay = \sum_{j=1}^t A_j V_j y = A_k y,
$
and by construction of $A_k$, $\BR_r(y) = \Supp(x)$. 
\end{proof}

\subsection{A Unifying Result}

The previous sections have identified three structural properties for data sets that ensure the existence of low player rank rationalizations.  In this section, we present a unifying result that extends the previous three theorems to provide a more robust low-rank construction and, in particular, shows that addition of a small number of observations to a data set does not have a big impact on the player rank necessary to rationalize the data. Specifically, we establish low-rank rationalizations for data sets that can be partitioned into three sets which are structurally simple in terms of dimensionality, support size, and chromatic number, respectively. For example, say we have a data set $\D$ in which all but $t$ observed mixed strategies are of support size $s$, then it can be partitioned into a set that has support bounded by $s$ and a set that has dimensionality bounded by $t$ (and an empty set that has chromatic number zero).  The following corollary then shows how to construct a rationalization for $\D$ of player rank at most $(2s +1) + t$.\footnote{The precise bound from the corollary is $2(s+t) +1$, but this can be strengthened to $(2s +1) + t$. For ease of presentation, we present the corollary with slightly loose factors.}

To obtain such a generalization we introduce the notion of the \emph{composite number} of a data set, which considers a 3-partition of the data set and combines the dimensionality of the first partition, the support size of the second partition, and the chromatic number of the third partition.

\begin{definition}[Composite number]
The row composite number $\comp_r(\D)$ of a data set $\D$ is defined to be the smallest number for which  there exists a $3$-partition of $\D$, $\{ \D_1, \D_2, \D_3 \}$, that satisfies $\dim(\Ob_c(\D_1)) + \max_{(x,y) \in \D_2} |\Supp(x)| + \rcn(\D_3) = \comp_r(\D)$. The column composite number $\comp_c(\D)$ is defined similarly.

The composite number of a data set, $\comp(\D)$, is the minimum of the row and column composite number: $\comp(\D) :=\min \{ \comp_r( \D), \comp_c(\D)\}$.

\end{definition}

\begin{corollary}
\label{cor:comp}
Let $\D$ be a data set with composite number $\comp(\D)$. If the observed mixed-strategy sets $\Ob_r(\D)$ and $\Ob_c(\D)$ are generic then $\D$ can be rationalized by a game of player rank at most $2 \comp(\D) + 1$.
\end{corollary}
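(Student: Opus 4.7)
The plan is to generalize the partition-and-combine strategy from the proof of Theorem~\ref{thm:cn}, now using all three low-rank constructions from the previous subsections simultaneously. Without loss of generality assume $\comp(\D) = \comp_r(\D)$ and fix a $3$-partition $\{\D_1, \D_2, \D_3\}$ of $\D$ that achieves the minimum, with parameters $d = \dim(\Ob_c(\D_1))$, $s = \max_{(x,y)\in\D_2}|\Supp(x)|$, and $c = \rcn(\D_3)$, so that $d+s+c = \comp(\D)$.

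First, I would separately rationalize the row player's behavior on each block. For $\D_1$, viewing the corollary as a strengthening of Theorem~\ref{thm:ub-count} (so that $\D$, hence $\D_1$, is rationalizable and Proposition~\ref{p.lp} yields some $\hat{A}_1$ rationalizing $\D_1$), I would apply Lemma~\ref{lem:lowrank} with $Y$ equal to the matrix whose columns are the elements of $\Ob_c(\D_1)$ to extract $A_1$ of rank at most $d$ with $A_1 y = \hat{A}_1 y$ for every $y \in \Ob_c(\D_1)$; this suffices to preserve $\BR_r^{A_1}(y) = \Supp(x)$ on $\D_1$. For $\D_2$, I would reuse verbatim the polynomial construction from the proof of Theorem~\ref{thm:ss} to obtain $A_2$ of rank at most $2s+1$. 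For $\D_3$, I would reuse the chromatic-partition construction from the proof of Theorem~\ref{thm:cn} to obtain $A_3$ of rank at most $2c$.

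Next, I would glue these matrices together with projection maps. Using the genericity of $\Ob_c(\D)$, for each $j \in \{1,2,3\}$ there exists $V_j \in \mathbb{R}^{n \times n}$ satisfying $V_j y = y$ for $y \in \Ob_c(\D_j)$ and $V_j y = \mathbf{0}$ for $y \in \Ob_c(\D_i)$ with $i \ne j$; these finitely many linear conditions are consistent precisely because the combined list of observations is linearly independent (genericity), exactly as in the proof of Theorem~\ref{thm:cn}. Setting $A := A_1 V_1 + A_2 V_2 + A_3 V_3$, subadditivity of rank yields
\[
\rank(A) \leq d + (2s+1) + 2c \leq 2(d+s+c) + 1 = 2\comp(\D)+1,
\]
and for any $(x,y) \in \D_j$ we have $A y = A_j y$, so $\BR_r^A(y) = \Supp(x)$ and $A$ rationalizes the row player on all of $\D$. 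Running the symmetric argument (or appealing to the column composite number) produces $B$, completing the construction.

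The main technical obstacle I anticipate is the dimensionality block: unlike the support-size and chromatic-number constructions, which build their rationalizing matrices from scratch, Lemma~\ref{lem:lowrank} requires an input rationalizing matrix for $\D_1$, which must be imported as a standing hypothesis (rationalizability of $\D$) rather than deduced from composite-number finiteness alone. A secondary, purely bookkeeping concern is the slight looseness in the rank bound noted in the footnote: the raw sum is actually $d + 2s + 2c + 1$, which is $\leq 2\comp(\D)+1$ after absorbing the $d$ term, matching the stated bound while leaving room for the sharper $(2s+1)+d+2c$ estimate.
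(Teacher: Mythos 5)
Your proposal is correct and follows essentially the same route as the paper: fix the optimal $3$-partition, obtain $A_1, A_2, A_3$ from the constructions of Theorems~\ref{thm:ub-count}, \ref{thm:ss}, and \ref{thm:cn}, glue them via genericity-based projection matrices $V_j$ into $A = \sum_j A_j V_j$, and bound the rank by subadditivity. Your side remark about the $\D_1$ block is a fair observation (the paper glosses over it too), though under genericity $\dim(\Ob_c(\D_1))$ equals $|\Ob_c(\D_1)|$ and a rank-$d$ rationalizing matrix for $\D_1$ can be built directly, so no standing rationalizability hypothesis is actually needed.
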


\begin{proof}
Below we show that there exists a payoff matrix $A$ of rank at most $2\comp_r(\D) +1 $ that rationalizes the row player's strategies in $\D$. A similar argument establishes the existence of a matrix $B$ (which rationalizes the column player's strategies) of rank no more than $2 \comp_c(\D) + 1$, and hence we get the desired claim.

Say $\{ \D_1, \D_2, \D_3\}$ is a $3$-partition that satisfies $\dim(\Ob_c(\D_1)) + \max_{(x,y) \in \D_2} |\Supp(x)| + \rcn(\D_3) = \comp_r(\D)$. The constructions of Theorems~\ref{thm:ub-count},~\ref{thm:ss}, and~\ref{thm:cn}, imply that there exist matrices $A_1$, $A_2$, $A_3$ of rank $\dim(\Ob_c(\D_1))$, $2 \max_{(x,y) \in \D_2} |\Supp(x)| +1 $, and $2 \rcn(\D_3) $ respectively such that $A_i$ rationalizes the row player's observations in $\D_i$ for all $i \in [3]$.

Since $\Ob_c(\D)$ is generic, there exists matrix $V_i$ for all $i \in [3]$ that satisfies the following equalities:
\begin{align*}
V_i y & = y \qquad \forall \ y \in \Ob_c(\D_i)   \\
V_i y & = \mathbf{0} \qquad \forall y \in  \Ob_c(\D) \setminus  \Ob_c(\D_i)
\end{align*}
Note that payoff matrix $A = \sum_{i=1}^3 A_i V_i$ rationalizes the row players observations in $\D$ and is of rank at most $2 \comp_r(D) + 1$. 
\end{proof}

This result serves as another illustration of why player rank is an appealing choice for the revealed preference exercise in this paper, since it allows us to merge the constructions used in Theorems \ref{thm:ub-count}, \ref{thm:ss}, and \ref{thm:cn}. % A proof of the corollary is given in Appendix \ref{s.proofsmainresults}.

\section{A Lower Bound on Player Rank}
\label{s.lb}

The results to this point of the paper have focused on constructing rationalizing games with low player rank, thus guaranteeing the observed equilibria can be computed efficiently.  It is also natural to ask if there exist data sets that \emph{require} rationalizations to have large player rank.  In the following, we show that such data sets do exist. In particular,  there exists a data set that requires any rationalization to have player rank at least $n-1$.   %Proof of Theorem \ref{thm:lb} appears in Appendix~\ref{s.prooflb}.

\begin{theorem}
\label{thm:lb}
Any game $(A,B)$ that rationalizes $\D=\{ (u_k, u_k) \mid k \in \{2,3,\ldots, n \} \}$  has player rank at least $n-1$, i.e., $\rank(A) \geq n-1$ and $\rank(B) \geq n-1$.
\end{theorem}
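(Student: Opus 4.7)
My plan is to prove $\rank(A) \geq n-1$ by exhibiting $n-1$ linearly independent vectors inside $\Col(A)$; the bound on $\rank(B)$ then follows by the same argument applied to the column player, since $u_k^T B$ plays the role of $Au_k$.

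The first step is to translate strictness of each observation $(u_k, u_k)$ into concrete structural information. Setting $\pi_k := u_k^T A u_k$, the fact that $\Supp(u_k) = \{1, \ldots, k\}$ equals $\BR_r(u_k)$ gives $(Au_k)_i = \pi_k$ for every $i \leq k$ and $(Au_k)_i < \pi_k$ for every $i > k$. In particular, for $k = n$ we get $Au_n = \pi_n \mathbf{1}_n$. For $k = 2, \ldots, n-1$ I would define the shifted vectors
\[
w_k \ := \ Au_k - \pi_k \mathbf{1}_n \, ,
\]
whose first $k$ coordinates are zero and whose remaining $n-k$ coordinates are strictly negative.

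The central technical step is to establish that $w_2, w_3, \ldots, w_{n-1}$ are linearly independent. I would stack them as columns and look at the $(n-2) \times (n-2)$ submatrix indexed by rows $\{3, \ldots, n\}$ and columns $\{2, \ldots, n-1\}$; its $(i,k)$-entry $(w_k)_i$ vanishes precisely when $i \leq k$ and is strictly negative otherwise, so after reordering it is lower triangular with strictly nonzero diagonal, hence nonsingular. The third step lifts this into $\Col(A)$: since every $Au_k \in \Col(A)$ and (assuming $\pi_n \neq 0$) $\mathbf{1}_n = Au_n / \pi_n \in \Col(A)$, each $w_k$ also lies in $\Col(A)$, so $\Col(A) \supseteq \Span\{w_2, \ldots, w_{n-1}, \mathbf{1}_n\}$. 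This collection of $n-1$ vectors is linearly independent: in any dependence $\beta \mathbf{1}_n + \sum_k \alpha_k w_k = 0$, the first coordinate evaluates to $\beta$ (since every $(w_k)_1 = 0$), forcing $\beta = 0$; the independence of the $w_k$'s then forces each $\alpha_k = 0$. Hence $\rank(A) \geq n-1$.

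The hard part I expect is justifying the assumption $\pi_n \neq 0$, since if $Au_n = 0$ then $\mathbf{1}_n$ is no longer certified to lie in $\Col(A)$ and the plain version of the argument drops one dimension. I would resolve this by invoking the affine-normalization observation from the proof of Proposition~\ref{p.lp}: any rationalization can be rescaled and shifted to one with entries in $[0,1]$ without changing its set of strict Nash equilibria, and under nonnegative entries $\pi_n = \tfrac{1}{n}\sum_j A_{ij}$ can only vanish if the corresponding row is identically zero, which if it held for all rows would make $A = 0$ and contradict the strictness of $(u_2, u_2)$. With this normalization in hand, $\pi_n > 0$ is automatic and the three-step argument above gives the desired $\rank(A) \geq n-1$ (and symmetrically $\rank(B) \geq n-1$).
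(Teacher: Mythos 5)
Your core construction is correct and is a genuinely different route from the paper's: you certify the rank through the column space, using the shifted vectors $w_k = Au_k - \pi_k\mathbf{1}_n$ and a triangular-submatrix argument, whereas the paper works in the row space with the differences $v_j := A_{(1)} - A_{(j)}$, which satisfy $v_j^T u_k = 0$ for $j \le k$ and $v_j^T u_k > 0$ for $j > k$ and are shown independent by pairing a hypothetical dependence against a suitable $u_k$. Your derivation of the sign pattern of $w_k$ from strictness, the nonsingularity of the submatrix on rows $\{3,\dots,n\}$ and columns $\{2,\dots,n-1\}$, and the independence of $\{w_2,\dots,w_{n-1},\mathbf{1}_n\}$ are all fine. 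The gap is exactly the degenerate case you flag, and the repair you propose does not close it.

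The repair fails because the theorem is a lower bound over \emph{every} rationalizing $(A,B)$, while the normalization from Proposition~\ref{p.lp} replaces $A$ by $A' = cA + d\,\mathbf{1}_n\mathbf{1}_n^T$. That is a rank-one perturbation: it preserves the set of strict Nash equilibria but not the rank, so from $\rank(A') \ge n-1$ you can only conclude $\rank(A) \ge n-2$ --- you lose precisely the dimension you were trying to rescue (and you cannot take $d=0$ in general, since $A$ may have negative entries). The obstruction is not merely cosmetic: when $\pi_n = 0$ we have $Au_n = 0$, so the certificate set $\Span\{Au_2,\dots,Au_n\} = \Span\{Au_2,\dots,Au_{n-1}\}$ has dimension at most $n-2$, and no bookkeeping with these vectors alone can reach $n-1$; some additional input is required. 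The cleanest fix is to transpose your argument to the row space: the vectors $v_j = A_{(1)} - A_{(j)}$ for $j = 2,\dots,n$ lie in the row space unconditionally (no appeal to $\mathbf{1}_n$ belonging to the column space is needed), their inner products $v_j^T u_k = (Au_k)_1 - (Au_k)_j$ realize the same incidence pattern of zeros and strict signs that your $w_k$'s encode, and the same leading-index argument yields their independence. This is precisely the paper's proof, and it requires no case analysis.
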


\begin{proof}
For $2\leq k \leq n$, recall that $u_k \in \Delta^n$ denotes the uniform distribution over the set $\{1,2, \ldots, k \}$. Consider the following data set with $n-1$ observations, $\D=\{ (u_k, u_k) \mid k \in \{2,3,\ldots, n \} \}$.

Note that the data set $\D$ is rationalizable. In particular, the game obtained by setting the payoff matrices of both the players to $I_n$ (the $n \times n$ identity matrix) rationalizes $\D$. The player rank of the rationalization, $(I_n, I_n)$, is $n$. Below we establish that in fact the player rank of any game that rationalizes $\D$ is at least $n-1$.

Say $(A,B)$ is a rationalization of $\D$. Let $A_{(i)}$ be the $i$th row of the matrix $A$ and for $2 \leq j \leq n$, we define $n-1$ vectors $v_{j}$s as follows: $v_{j} := A_{(1)} - A_{(j)}$.  Note that $v_{j}$s lie in the row space of $A$. We will show that $v_{j}$s are linearly independent and hence get that the dimension of the row space of $A$ is at least $n-1$. This, in turn, proves that the rank of $A$ is at least $n-1$. Since the data set $\D$ is symmetric, via a similar argument, we can establish that the rank of $B$ is no less than $n-1$. This overall establishes the stated claim that the player rank of $(A,B)$ is at least $n-1$.

Since mixed strategy pair $(u_k,u_k)$ is a strict Nash equilibrium in $(A,B)$, we have $e_1^T A u_k = e_j^T A u_k$ for all $j \in \{2,\ldots, k\}$ and $e_1^T Au_k > e_j^T A u_k$ for all $j \in \{k+1,\ldots, n\}$. That is, $A_{(1)}^T u_k = A_{(j)}^T u_k $ for all $j \in [k]$ and $A_{(1)}^T u_k > A_{(j)}^T u_k $ for all $j \notin \{k+1,\ldots, n\}$. We can rewrite these equalities and inequalities using the definition of $v_j$s as follows:
\begin{align*}
v_j^T u_k & = 0 \qquad \forall j \in \{2,\ldots,k\} \textrm{ and} \\
v_{j}^T u_k &> 0 \qquad \forall j \in \{k+1, \ldots, n \}.
\end{align*}

Hence, for all $2 \leq k < n $, vector $v_2, v_3, \ldots, v_{k+1}$ are linearly independent. Say for contradiction that they are linearly dependent. Then we can write $v_{k+1}$ as a linear combination of $v_2, \ldots, v_k$, i.e., $v_{k+1} = \sum_{j=2}^k \lambda_j v_j$. Taking inner product of both sides of this equation with $u_k$ leads to a contradiction.

Overall, we get that the vectors $v_2, \ldots, v_n$ are linearly independent and this completes the proof. 
\end{proof}

This result is important for two reasons.  First, the theorem highlights that Theorems \ref{thm:ub-count}, \ref{thm:ss}, and \ref{thm:cn} are (nearly) tight.  Specifically, by construction, data set $\D'$ satisfies: (i) the observed strategies of each player lie in a subspace of dimension $n-1$; (ii) each observed strategy has support size at most $n$; and (iii) the chromatic number of the data set is $n-1$. It follows immediately that the bounds in Theorem~\ref{thm:ub-count} are exactly tight, and the bounds in Theorems~\ref{thm:ss} and~\ref{thm:cn} are tight to within a factor of 2.

Second, the lower bound strongly suggests that adding computational constraints to the theory of Nash equilibrium has testable implications, i.e., it is likely that there exist data sets for which any payoff matrices that explain the data sets require players to solve computationally hard problems when computing the observed equilibrium.  This is in contrast with single-person consumer theory \cite{consumer}. It is still possible that rationalizing games could be simple, but it seems unlikely. Investigating this issue further is an intriguing direction for future work.

\section{Discussion and Future Work}
\label{s.end}

%The results in this paper study the empirical implications of a strengthening of the theory of Nash equilibrium to require, not only that players play equilibrium, but also that there be some explanation in the form of an efficient algorithm of how they obtained an equilibrium. Such a strengthening is motivated by recent results in computer science, but it also echoes concerns voiced by many economists over the years. Our paper shows that the stronger requirement of Nash equilibrium that additionally requires an explanation of how players obtained the equilibrium {\em may not be observably different} from the simple theory of equilibrium. Specifically, we have proven that when the observed data takes certain forms (small dimensionality and/or support and/or chromatic number), the story of efficient computation comes for free. (Of course, there are also data that cannot be explained with simple-to-compute games.)

Our work is the first to consider the implications of computational complexity for mixed strategy behavior in games with multiple players. For our results we make certain simplifying assumptions that are natural given prior work on revealed preference theory for a single agent and on the computational complexity of equilibrium. In this section we discuss these assumptions, their relaxation, and interesting open problems.

\paragraph{Observations of exact mixed strategy behavior.}
We assume that exact mixed strategy behavior is observable, and ignore the presence of noise in our observations. This assumption is justified as our results would be easier to establish under the extra flexibility afforded by noisy observations.  Further, while a number of other observations can be considered as possible inputs, e.g., pure strategy samples from a mixed Nash equilibrium, the model of observations we consider for mixed strategies is the natural first step.  The assumption that exact mixed strategy behavior can be observed is also made in the theory of individual stochastic choice, e.g.,
\cite{luce1959,mcfadden_zarembka,mcfad91,mcfadden2005revealed}. In the theory of
individual stochastic choice, these models have developed into actual
empirical tools that are very heavily used among economists (for
example, they are used as a standard tool by most empirical economists). See
\cite{anderson1992discrete} for an exposition of the theory as used by
empiricists. %We take the game-theoretic analogous route in assuming that mixed strategies are observable.

\paragraph{Repeated observations from the same game.} The data sets considered in this paper comprise of multiple observations of mixed-strategy behavior that potentially correspond to different equilibria.  Superficially, this may suggest that one could consider observed player behavior as resulting from a learning dynamic, or as outcomes from a repeated game.  However, given the ``anything goes'' message of the folk theorem (see~\cite{fudenberg1991game}), studies in experimental economics are often designed explicitly to avoid repeated game behavior. This is typically enforced through anonymous and random matching experimental designs. Thus, Nash equilibria is often the model of interest.  Further, considering repeated games or learning dynamics would again involve relaxing some of the constraints considered in the current paper, since it would no longer be required that the payoff be maximized at each observed strategy profile. It is likely that these relaxations would make it easier to provide explanation of the data via tractable payoff matrices. %So, our constructions would thus still be applicable, and our results would still apply in these more relaxed settings.

\paragraph{Alternate notions of tractability.} Our results provide strong motivation for the use of player rank as a notion of tractability; however there are many other properties which ensure the existence of efficient algorithms for computing Nash equilibria.  These include: a \emph{game rank} of zero or one,\footnote{If the game rank is 3 or larger then the computation of a Nash equilibria is PPAD-hard~\cite{rank4}.} where the game rank is the rank of $C:=A+B$; the existence of a
potential function; or the existence of a pure Nash equilibrium. None of these properties is appropriate for use in the exercise here because each is binary: either a data set possesses the particular
property, or it does not; and  absence of the property renders
algorithms based on the property useless in computing equilibria. Further, in Appendix \ref{s.proofs1}, we show that for each property other than
player rank, simple data sets with a small number of observations
necessitate rationalizations that do not satisfy the property.

\paragraph{Genericity of the observations.} Theorems \ref{thm:ss} and \ref{thm:cn} assume that the set of observations are generic. For games with $n$ pure strategies for the players, if the number of observations are less than $n$, this is a mild assumption, since a tiny perturbation is sufficient to ensure genericity. The assumption does necessitate that the number of observations be at most $n$. Since computational complexity is interesting for large values of $n$, our work is still relevant for large classes of data. From a technical point of view however, it would be very interesting to see if this assumption could be removed; or if our computational complexity results could be extended to other classes of data sets.

\paragraph{Learning versus testing.}  Our focus in this paper is on testing. We do not address the related problem of inference in this paper. The problem of actually learning the underlying payoffs of players by observing player behavior has previously been studied for the case of a single consumer~\cite{BeigmanV06,ZadimoghaddamR12} and for correlated equilibrium with multiple players~\cite{WaughZB11}. Our work, while primarily focused on testing, does offer some insight into how structural properties of the observed behavior affect the rank of the payoff matrices we are trying to learn. A more comprehensive study of learning the underlying utilities of players in game-theoretic settings  is an important direction for future work.

\section*{Acknowledgments}
This research was supported by NSF grants CNS-0846025, EPAS-1307794, and CCF-1101470, along with a Linde/SISL postdoctoral fellowship.

\bibliographystyle{plain}
\bibliography{rev-nash}

\appendix

\section{Game Rank, Potential Games, and Pure Strategy Equilibria}
\label{s.proofs1}

The goal of this paper is to understand when it is possible to rationalize data via payoff matrices for which the mixed strategies observed are efficiently computable. Given the hardness of computing equilibria in general, this requires that the rationalizations we generate must have some special property that allows for efficient computation. We do this by focusing on rationalizations with small player rank; however there are a number of other properties that could be considered.  For example, a small game rank, the existence of a potential function, or the existence of a pure Nash equilibrium.  In the following, we highlight that these alternatives are not well-suited for use in this paper.

\subsection{Game Rank}

The connection between game rank and computational efficiency has only recently begun to be understood.  To this point, polynomial-time algorithms to compute Nash equilibria are known when the game rank is either zero \cite{Nisan2007} or one~\cite{AdsulGMS11}, and it has recently been shown that when the game rank is four or more computing an equilibrium is PPAD-hard~\cite{rank4}.  Though incomplete, these results are already problematic for the use of game rank in this paper.  In particular, the following result highlights that only very small data sets can be guaranteed to have game rank small enough to ensure that a computationally efficient algorithm exists, e.g., there is a data set with 9 observations that necessitates game rank of at least two.

\begin{theorem}
\label{thm:lbC}
There exists a rationalizable data set $\D$ with $2n+1$ observations such that any game $(A,B)$  that rationalizes $\D$ has game rank at least $n-2$, i.e., $\rank(A+B) \geq n-2$.
\end{theorem}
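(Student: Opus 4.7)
My plan is to upgrade the Theorem~\ref{thm:lb} construction $\D_1 := \{(u_k, u_k) : 2 \leq k \leq n\}$, which has size $n-1$ and forces each of $\rank(A), \rank(B) \geq n-1$, to a larger data set $\D \supseteq \D_1$ of size $2n+1$ that additionally forces $\rank(A+B) \geq n-2$. The rationale for the $n+2$ extra observations is that $\D_1$ alone cannot rule out extreme cancellation between $A$ and $B$ (for instance a zero-sum-like structure $A \approx -B$), so my augmented data set must include auxiliary observations whose strict Nash conditions jointly constrain rows of $A$ and rows of $B$ in a way that translates into constraints on $C := A+B$. A natural candidate for the augmentation $\D_2$ is a collection of $n+2$ observations of the form $(\bar{u}_k, \bar{u}_k)$ with $\bar{u}_k$ uniform on $\{n-k+1, \ldots, n\}$ --- ``reflected'' versions of the $u_k$ --- together with a few pure-row observations $(e_i, y_i)$ whose strict Nash conditions directly pin down rows of $B$ via $B^T e_i$.

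To verify rationalizability, I would exhibit an explicit game close to $(I_n, I_n)$: each $(u_k, u_k) \in \D_1$ is a strict Nash equilibrium under $(I_n, I_n)$ because $I_n u_k = u_k$ attains its maximum exactly on $\Supp(u_k) = [k]$; after an appropriate symmetry-preserving perturbation of $I_n$ the observations in $\D_2$ can also be accommodated. For the lower bound, I would follow the template of Theorem~\ref{thm:lb}: fix any rationalization $(A, B)$, write $C := A+B$, and form $n-2$ candidate row-space vectors $w_j := C_{(1)} - C_{(j)}$ for $j$ ranging over a suitable $(n-2)$-element subset of $\{2, \ldots, n\}$. To prove that these vectors are linearly independent (and hence that $\rank(C) \geq n-2$), I would evaluate $w_j \cdot z_\ell$ at a carefully chosen family of test vectors $z_\ell$ drawn from $\Ob_c(\D)$, using the strict Nash equalities on $\Supp(x_k)$ from both $\D_1$ and $\D_2$ to produce a triangular-type system of inner products analogous to the one appearing in the proof of Theorem~\ref{thm:lb}.

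The hard part will be that a strict Nash observation $(x, y)$ constrains $Ay$ (maximum on $\Supp(x)$) and $B^T x$ (maximum on $\Supp(y)$), but never $By$; yet $w_j \cdot y$ expands as $(A_{(1)} - A_{(j)}) \cdot y + (B_{(1)} - B_{(j)}) \cdot y$ and the second summand involves precisely $By$ on the ``wrong'' side of $B$. Overcoming this asymmetry is the whole reason $\D_2$ is needed: the pure-row observations $(e_i, y_i)$ in $\D_2$ provide row-side information about $B$ (since $B^T e_i$ is the $i$-th row of $B$), which is what enables me to determine $(B_{(1)} - B_{(j)}) \cdot y$ for enough indices $j$. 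This is also why the natural bound is $n-2$ rather than $n-1$: after the $\D_1$-type constraints trivialize the $A$-contribution, there remains (without assuming $B = B^T$) a two-dimensional ambiguity in the $B$-contribution that cannot be eliminated. Calibrating $\D_2$ so that it uses exactly $n+2$ observations and leaves exactly this two-dimensional slack is the delicate combinatorial step of the construction, and is the origin of the precise constants $2n+1$ and $n-2$ appearing in the theorem.
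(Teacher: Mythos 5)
Your plan has a genuine gap at its core, and you have in fact identified it yourself: the strict Nash conditions only ever constrain $Ay$ and $B^Tx$, never $By$, so the inner products $w_j \cdot y = (A_{(1)}-A_{(j)})\cdot y + (B_{(1)}-B_{(j)})\cdot y$ contain a $B$-term that the data cannot control. Your proposed fix --- pure-row observations $(e_i,y_i)$ that ``pin down rows of $B$ via $B^Te_i$'' --- does not close it: a strict Nash condition at $(e_i,y_i)$ only tells you \emph{where} the row $B_{(i)}$ attains its maximum (on $\Supp(y_i)$), not the numerical values of its entries, so it cannot determine $(B_{(1)}-B_{(j)})\cdot y$ as your triangular system would require. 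Beyond this, the construction of $\D_2$ is explicitly left uncalibrated (your ``delicate combinatorial step''), and rationalizability of the augmented set is asserted only via an unspecified perturbation of $I_n$, so as it stands the proposal is a program rather than a proof.

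The paper sidesteps the $By$ obstruction entirely by choosing observations whose strict-Nash conditions determine the \emph{structure} of $A$ and $B$, rather than by testing rows of $A+B$ against observed strategies. It takes $\D=\{(e_k,e_k)\}_{k=1}^n\cup\{(v_k,v_k)\}_{k=1}^n\cup\{(u_n,u_n)\}$, where $v_k$ is uniform on $[n]\setminus\{k\}$; this is rationalized by $(I_n,I_n)$. The observation $(u_n,u_n)$ forces all row sums of $A$ to agree, each $(v_k,v_k)$ forces the partial row sums omitting column $k$ to agree across rows $i\neq k$, and subtracting shows all off-diagonal entries of column $k$ of $A$ are equal; the pure observations $(e_k,e_k)$ then force each diagonal entry to strictly dominate its column. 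Hence $A=D+\mathbf{1}\alpha^T$ with $D$ diagonal and positive, and symmetrically $B=D'+\beta\mathbf{1}^T$, so $A+B$ is a full-rank diagonal matrix plus a matrix of rank at most $2$, giving $\rank(A+B)\ge n-2$ by rank subadditivity. To salvage your route you would need observations constraining $A$ and $B$ tightly enough to make the $B$-contribution to $w_j\cdot y$ computable, which in practice amounts to deriving the same structural decomposition.
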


\begin{proof}%[Proof of Theorem \ref{thm:lbC}]
Let $u_n \in \Delta^n$ be the uniform distribution over $[n]$ and $e_k \in \Delta^n$ be the vector with a $1$ in the $k$th coordinate and $0$'s elsewhere. Write $v_k$ to denote the uniform distribution over $[n] \setminus \{ k \}$. We consider the following data set with $2n+1$ observations, $\D
= \{(e_k,e_k) \mid 1 \leq k \leq n  \} \cup \{ (v_k,v_k)  \mid 1 \leq k \leq n \} \cup  \{ (u_n,u_n) \} $. Note that $\D$ can be rationalized by the game $(I_n, I_n)$, where $I_n$ is the $n \times n$ identity matrix.

Say game $(A,B)$ rationalizes $\D$. First we show that in every column of $A$ all the off-diagonal entries are equal to each other. That is, for all $k \in [n]$ and for all $i, i' \in [n] \setminus \{ k \}$ we have $A_{i,k} = A_{i',k}$. A similar result holds for the rows of matrix $B$.

Since $(u_n, u_n)$ is a strict Nash equilibrium in $(A,B)$ we have $\Supp (u_n) = \BR_r(u_n) $. This implies that all the components of the vector $A u_n$ are equal, i.e., the row sums of $A$ are equal to each other. Formally,

\begin{align}
\label{eq:rs}
\sum_{j} A_{i,j} = \sum_{j} A_{i',j} \quad \forall i,i' \in [n].
\end{align}

Similarly, the fact that $(v_k, v_k)$ is a strict Nash equilibrium implies $\Supp (v_k) = \BR_r(v_k) $. In particular, for all $i, i' \in \Supp(v_k)$ the $i$th and the $i'$th component of $Av_k$ must be equal to each other. Since the $i$th component of the vector $A v_k$ is equal to $\frac{1}{n-1} \sum_{j \neq k } A_{i, j } $ and $\Supp(v_k) = [n] \setminus \{k \}$, we have the following equality for all $i,i' \in [n] \setminus \{k \}$:
\begin{align}
\label{eq:rso}
\sum_{j \neq k} A_{i,j} = \sum_{j \neq k } A_{i',j}.
\end{align}

Subtracting (\ref{eq:rso}) from (\ref{eq:rs}) gives us $A_{i,k} = A_{i',k}$ for $i,i' \in [n] \setminus \{ k \}$.

Finally, using the fact that $(e_k, e_k)  \in \D$  we get that $(k,k)$ is a pure and strict Nash equilibrium in $(A,B)$ for all $k$. Therefore, $A_{k,k} > A_{i, k}$ for all $i \neq k$. Say the off-diagonal entries of the $k$th column of $A$ are equal to $\alpha_k$. We have $A_{k,k} > \alpha_k$ and matrix $A$ has the following form:

\begin{align*}
A = \begin{pmatrix}
A_{1,1} & \alpha_2 & \alpha_3 & \cdots & \alpha_n \\
\alpha_1 & A_{2,2} & \alpha_3 & \cdots & \alpha_n \\
\alpha_1 & \alpha_2 & A_{3,3} & \cdots & \alpha_n \\
\vdots & \vdots & \vdots & \vdots & \vdots  \\
\alpha_1 & \alpha_2 & \alpha_3 & \cdots & A_{n,n}
\end{pmatrix}
\end{align*}
We can write $A$ as the sum of a diagonal matrix and an outer product.
\begin{align*}
A = \begin{pmatrix} A_{1,1} - \alpha_1 \\
& A_{2,2} - \alpha_2  & & \textrm{\huge 0 } \\
& & A_{3,3} - \alpha_3 \\
& \textrm{\huge 0 }  \\
& & & A_{n,n} - \alpha_n
\end{pmatrix}
+ \begin{pmatrix}
1 \\
1 \\
\vdots \\
1
\end{pmatrix}
\begin{pmatrix} \alpha_1 & \alpha_2 & \cdots \alpha_n \end{pmatrix}
\end{align*}
Write $D$ to denote the above diagonal matrix and $P$ to denote the outer product. We have $A = D + P$. Note that all the diagonal entries of $D$ are positive, since $A_{k,k} > \alpha_k$ for all $k$. Similarly, we can decompose column player's payoff matrix $B$ into a diagonal matrix $D'$ and an outer product $P'$, i.e., $B = D' + P'$. Like $D$, the all the diagonal entries of $D'$ are positive.

Overall, we have $ A + B = D + D' + P + P'$. The rank of the sum of two matrices satisfies $\rank(X + Y ) \leq \rank(X) + \rank(Y)$. Therefore, $\rank(D + D') \leq \rank(A+B) + \rank( - (P+P'))$. Since $P$ and $P'$ are outer products, $\rank(-( P + P')) \leq 2$. The diagonal entries of both $D$ and $D'$ are positive, hence matrix $D + D'$ has full rank. This gives us the desired bound, $\rank(A+B) \geq n-2$. 
\end{proof}

\subsection{Potential Games}

When a game has a potential function, it is termed a \emph{potential game}, and an appealing property of such games is that a pure strategy equilibrium is guaranteed to exist (e.g.,~\cite{Nisan2007}).   Not surprisingly, this property is limiting for the purposes of this paper.  That is, if we were to use the existence of a pure strategy equilibria as a property to yield efficient computability of an equilibrium in the rationalizing game, then we would be restricted to extremely limited data sets. To see this, note that there are very simple data sets that cannot be rationalized by a game that has a pure Nash equilibrium, and consequently cannot be rationalized by a potential game.

\begin{theorem}
\label{thm:lbpne}
There exists a rationalizable data set $\D$ with three observations such that any game $(A,B)$ that rationalizes $\D$ does not possess a pure Nash equilibrium.
\end{theorem}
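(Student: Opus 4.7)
The plan is to exhibit an explicit data set $\D$ of three mixed-strategy observations that is rationalizable but whose rationalizations all lack a pure Nash equilibrium. I would proceed in three steps. First, fix a bimatrix game $(A^*, B^*)$ that itself has no pure Nash equilibrium and admits at least three distinct strict mixed Nash equilibria; a natural candidate is a $4 \times 4$ ``decoupled matching pennies'' game, where the pure-strategy sets $\{1,2\}$ and $\{3,4\}$ form independent matching-pennies sub-games, yielding a mixed Nash within each sub-game plus a fully mixed Nash over all four strategies. Take these three strict Nash equilibria as the observations constituting $\D$. Rationalizability of $\D$ is then automatic from $(A^*, B^*)$.

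Second, I would consider an arbitrary rationalization $(A,B)$ of $\D$ and argue for the absence of pure Nash by contradiction. Each observation $(x_k, y_k) \in \D$ gives, by strict-Nash, the linear equalities $(Ay_k)_i = (Ay_k)_{i'}$ for $i,i' \in \Supp(x_k)$ together with the strict inequalities $(Ay_k)_i > (Ay_k)_j$ for $j \notin \Supp(x_k)$, and the symmetric statements for $x_k^{T} B$. Assuming toward contradiction that $(A,B)$ has a pure Nash $(i^*,j^*)$, so that $A_{i^* j^*} \geq A_{k j^*}$ for every $k$ and $B_{i^* j^*} \geq B_{i^* \ell}$ for every $\ell$, I would combine the three sets of linear constraints using non-negative multipliers determined by the weights inside the observed strategies to derive either $A_{i^* j^*} < A_{k j^*}$ for some $k$, or $B_{i^* j^*} < B_{i^* \ell}$ for some $\ell$, contradicting the assumed pure-Nash property.

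The main obstacle is choosing the three observations so that, for every one of the $n^2$ candidate cells $(i^*,j^*)$, some non-negative combination of the observation constraints yields such a contradiction. A first attempt with uniform distributions on overlapping pairs of pure strategies is typically too permissive; it admits a trivial rationalization such as $(I_n, I_n)$, which has pure Nash equilibria on the diagonal. The remedy is to use observations with carefully selected non-uniform weights (for instance, the exact mixed-Nash weights of the target game $(A^*,B^*)$), so that the strict-Nash equalities pin down the payoff matrices tightly enough to preclude every diagonal-type configuration. The final step is then an exhaustive case analysis over the $n^2$ pure cells, and the principal difficulty is in keeping this case analysis tractable while ensuring that three observations suffice to cover every case.
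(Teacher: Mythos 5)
Your proposal is a plan rather than a proof, and the step you yourself flag as ``the main obstacle'' --- choosing three observations so that \emph{every} candidate pure cell $(i^*,j^*)$ is killed by some nonnegative combination of the equilibrium constraints --- is exactly the part that is missing and is unlikely to go through for the data set you propose. If all three observations have both players fully (or broadly) mixed, as with the three equilibria of a decoupled matching-pennies game, the only constraints a rationalization must satisfy are equalities/inequalities on the \emph{aggregates} $Ay_k$ and $x_k^T B$; these leave enormous freedom in the individual entries of $A$ and $B$, and nothing prevents a rationalizing $B$ from having a strict column-wise/row-wise maximum at some cell that also happens to be maximal in $A$'s column. You acknowledge this permissiveness but offer no concrete choice of weights and no completed case analysis, so the argument does not close.

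The paper's proof avoids the case analysis entirely with one device you did not consider: make one player's observed strategies \emph{pure}. The data set is $((1,0,0),(0,\tfrac12,\tfrac12))$, $((0,1,0),(\tfrac12,0,\tfrac12))$, $((0,0,1),(\tfrac12,\tfrac12,0))$. Because the row player plays $e_i$ in observation $i$, strictness forces $\BR_c(e_i)=\Supp(y_i)$, which is a constraint on the individual entries of row $i$ of $B$: the two entries indexed by $\Supp(y_i)$ must be equal and strictly larger than the third. Hence in every row of any rationalizing $B$ the maximum is attained at a tie, so no cell can be a (strict) pure Nash equilibrium --- a two-line argument with no enumeration over cells. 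The lesson is that pure observed strategies convert the equilibrium conditions from aggregate constraints into entrywise constraints on the opponent's payoff matrix, which is what makes a three-observation impossibility argument feasible.
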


\begin{proof}
We consider a game where each player has 3 strategies, and a data set consisting of the following three observations: $((1,0,0),(0,1/2,1/2))$; $((0,1,0),(1/2,0,$ $1/2))$, and $((0,0,1),(1/2,1/2,0))$. Thus the row player plays a different pure strategy in each observation, while the column player randomizes uniformly over two strategies. To see that any rationalization by matrices $A$, $B$ does not admit a pure Nash equilibrium, suppose for a contradiction that $(i,j)$ is in fact a pure Nash equilibrium and consider the matrix $B$. The data set enforces that the maximum entry in each row is not unique, and hence no entry can be a strict pure Nash equilibrium. 
\end{proof}

%%%%%%%%

\end{document}